\documentclass[conference,a4paper,10pt]{IEEEtran}
\IEEEoverridecommandlockouts
\usepackage{cite}
\usepackage{amsmath}
\usepackage{amssymb,amsfonts,dsfont,amsthm}
\usepackage{algorithmic}
\usepackage{graphicx}
\usepackage{textcomp}
\usepackage{xcolor}
\usepackage{amsmath,amsfonts,graphicx,amssymb}
\usepackage{amsthm}
\usepackage{ifthen}
\usepackage{color}
\usepackage{caption}
\usepackage{subcaption}
\usepackage{mathtools}
\usepackage{bbm}
\usepackage{multirow}
\usepackage{footmisc}

\newtheorem{conjecture}{Conjecture}

\newtheorem{lemma}{Lemma}

\newtheorem{theorem}{Theorem}
\newtheorem{remark}{Remark}
\newtheorem{definition}{Definition}
\def\BibTeX{{\rm B\kern-.05em{\sc i\kern-.025em b}\kern-.08em
    T\kern-.1667em\lower.7ex\hbox{E}\kern-.125emX}}
\begin{document}

\title{Partial Server Pooling in Redundancy Systems 
\thanks{This work was supported in part by the Bharti Centre for Communications in IIT Bombay,
	grants from CEFIPRA, DST under the Indo-Korea joint programme of
	cooperation in Science and Technology.}
}

\author{
\IEEEauthorblockN{Akshay Mete, D. Manjunath, Jayakrishnan Nair}
\IEEEauthorblockA{\textit{Department of Electrical Engineering, IIT Bombay}}
\and
\IEEEauthorblockN{Balakrishna Prabhu}
\IEEEauthorblockA{\textit{LAAS-CNRS, Universit\'e de Toulouse}}
}
\newboolean{showcomments}
\setboolean{showcomments}{false}
\newcommand{\am}[1]{  \ifthenelse{\boolean{showcomments}}
	{ \textcolor{green}{(Akshay says:  #1)}} {}  }
\newcommand{\jk}[1]{  \ifthenelse{\boolean{showcomments}}
	{ \textcolor{blue}{(JK says:  #1)}} {}  }
\newcommand{\dm}[1]{  \ifthenelse{\boolean{showcomments}}
	{ \textcolor{red}{(DM says:  #1)}} {}  }
\newcommand{\bala}[1]{  \ifthenelse{\boolean{showcomments}}
	{ \textcolor{green}{(Bala says:  #1)}} {}  }

\newboolean{itc}
\setboolean{itc}{false}
\newcommand{\conftext}[1]{\ifthenelse{\boolean{itc}}{#1} {}  }
\newcommand{\TRtext}[1]{\ifthenelse{\boolean{itc}} {} {#1} }

\newcommand{\ra}{\rightarrow} 
\newcommand{\ua}{\uparrow}
\newcommand{\da}{\downarrow}
\newcommand{\prob}[1]{P\left(#1\right)} 
\newcommand{\imp}{\Rightarrow}
\newcommand{\R}{\mathbb{R}} 
\newcommand{\Exp}[1]{\mathbb{E}\left[#1\right]} 
\newcommand{\eqdist}{\stackrel{d}{=}}
\newcommand{\leqsim}{\lesssim}
\newcommand{\leqst}{\leq_{\mathrm{st}}}
\newcommand{\leqas}{\leq_{\mathrm{a.s.}}}
\newcommand{\indicator}{\boldsymbol{1}}
\newcommand{\ceil}[1]{\left\lceil #1 \right\rceil}
\newcommand{\floor}[1]{\left\lfloor #1 \right\rfloor}
\newcommand{\ignore}[1]{}
\newcommand{\norm}[1]{\|#1\|}
\newcommand{\fr}[1]{\left\{#1\right\}}
\newcommand{\N}{\mathbb{N}} 
\newcommand{\M}{\mathcal{M}}
\newcommand{\coc}{\mathsf{c.o.c.}}
\renewcommand{\cos}{\mathsf{c.o.s.}}
\allowdisplaybreaks[4]

\maketitle

\begin{abstract}

 Partial sharing allows providers to possibly pool a fraction of 
their resources when full pooling is not beneficial to them.
Recent work in systems without sharing has shown that redundancy 
can improve performance considerably. In this paper, we combine partial sharing
and redundancy by developing partial sharing models for providers operating 
multi-server systems with redundancy. Two M/M/N queues with redundant service models are considered. 
Copies of an arriving job are placed
in the queues of servers that can serve the job. Partial sharing
models for cancel-on-complete and cancel-on-start redundancy models
are developed. For cancel-on-complete, it is shown that the Pareto
efficient region is the full pooling configuration. For a
cancel-on-start policy, we conjecture that the Pareto frontier is
always non-empty and is such that at least one of the two providers
is sharing all of its resources. For this system, using bargaining
theory the sharing configuration that the providers may use is
determined.  Mean response time and probability of waiting are the
performance metrics considered.
\end{abstract}

\begin{IEEEkeywords}
  Resource pooling, Erlang-C systems, balanced fairness, redundancy
  service systems. 
\end{IEEEkeywords}

\maketitle
\section{Introduction}
\label{section:intro}
We consider resource sharing by service systems modeled as multi
server queueing systems, e.g., server farms, cloud computing systems,
call centers, inventory systems, and emergency services. These
services dimension their resources (e.g., number of servers) to
provide a prescribed quality of service (QoS). Two commonly used QoS
measures in delay systems (as opposed to loss systems) are the
probability of waiting for service (famously characterized by the
Erlang-C formula for M/M/N queues), and waiting and/or sojourn time
moments.

For many of the above mentioned systems, resources are expensive and
different independent systems could possibly share their resources to
improve their customers' QoS. It may also be that procurement of
additional resources takes time, and sharing could be a useful interim
measure. In \cite{Karsten11b}, two models for resource sharing among
different service providers are identified.~(1)~Providers pool their
existing resources with the expectation that the joint system is
beneficial over operating alone. (2)~Providers jointly determine the
total resources for the QoS requirements of the combined system. For
both systems, cooperative game theory is used to determine the cost
shares among the coalition of providers.

Our interest in this paper is related to the first kind of system
above, but in the setting of non-transferable utility. In this case,
the providers may not always have the incentive to completely pool
their resources, as the following example illustrates. Consider two
service providers, 1 and 2, modeled as $M/M/N$ queues. The providers
have, respectively, 20 and 30 servers and an offered load of,
respectively, 16 and 28 Erlangs. When operating alone, the providers'
QoS, measured as Erlang-C probabilities, are, respectively, 0.25 and
0.62. When the providers merge to create a coalition system of 50
servers with 44 Erlangs load, the QoS in the joint system is
0.28. Clearly, the first provider is not incentivized to join the
`naive' full pooling coalition. 
A natural question then is to seek partial pooling models that may
incentivize both providers to join the coalition. Here by partial pooling models we mean that each provider contributes a fraction (could be all) of its resources into a common pool which can then be used to serve requests from any provider.

The focus of this work is to develop partial sharing models in delay
systems and answer two key questions:~How to share?~How much to share?
We consider delay systems where arriving jobs are replicated into
queues at the servers that can serve them.  Sending redundant copies
of a job to queues at servers that can service them is of interest
due to their use in several systems like call centers and cloud
server farms. Two redundancy models are
popular---\emph{cancel-on-complete} and \emph{cancel-on-start}.
Analytical studies of these models without partial sharing
are available, among others,
\cite{gardner2017redundancy,bonald2017performance,Ayesta2018}.



The rest of the paper is organized as follows: In the next section, we
describe our system model involving two providers operating
multi-server service systems, and present our redundancy-based partial
sharing mechanism.
In Section \ref{section:coc}, we analyze partial sharing for
cancel-on-complete systems. Exploiting recent results
of \cite{bonald2017performance}, we show that full sharing is the only
Pareto-optimal configuration.
In Section \ref{section:cos}, we consider partial sharing in the
cancel-on-start system. We obtain the stationary distributions of the
number in the system, and for a special case, we show that the Pareto
region is such that at least one of the providers shares all of its
resources. Based on numerical evidence, we conjecture that this is
true in general. We then use bargaining theory to capture the stable
sharing agreement. We conclude in Section~\ref{section:discussions} by
showing that the results of cancel-on-complete are directly applicable
to the joint system that uses a single server (capacity equal to sum
of server capacities of the two systems) with two queues served
according to a balanced fair rate allocation. We also discuss related
literature on resource pooling and future work.

\section{System Model}
\label{section:system_model}

Consider two service providers $P_1$ and $P_2$ with $N_1$ and $N_2$
servers, respectively. The servers are homogenous with unit service
rate. Jobs of provider $P_i$ arrive according to a Poisson process of
rate $\lambda_i$, and the service requirements (a.k.a. sizes) of jobs
are i.i.d. exponential with mean $1/\nu_i < \infty.$ $\rho_i
:= \lambda_i/\nu_i$ denotes the traffic load 
$P_i.$ 

Each server has its own queue, and serves jobs using FCFS
discipline. Both providers use redundancy as follows. When a job
arrives, $d$ copies (a.k.a. replicas) of this job are sent to $d$
different servers. Further, both providers replicate copies to all
servers that can process their jobs. In the {\em
	standalone} \footnote{`Standalone' refers to the system with no pooling
	between providers.} system, this implies that $d = N_i$ for provider $i.$ On
the other hand, in a pooled system, $d_i$ can be larger than $N_i$ and
depends on the number of servers shared by the other provider
$-i.$\footnote{While referring to provider $i$, we use ${-i}$ to refer
	to the other provider.}

\bala{Can we say that the results also hold for any other $d$? And, that for
	simplicity we assume $d=N_i$?}
\jk{I don't have any understanding of how this would work with arbitrary $d_i.$
	Good question to think about in the future! But I'd be wary of making
	any such claim here.}

Two types of redundancy models are commonly studied in the
literature. The redundant copies of a job can either be removed from
the system at the instance when first of its copies starts service,
i.e., \emph{cancel-on-start} ($\cos$) replication, or at the instance
when first of its copies finishes its service, i.e., \emph{cancel-on-complete} 
($\coc$) replication. We analyze both $\cos$ as
well as $\coc$ policies in the next two sections.

For stability, we assume $\rho_i < N_i$ for $i = 1,2.$ This condition
is necessary and sufficient for both $\coc$ and $\cos$ when the
arrival process is Poisson and replica sizes are i.i.d. with an
exponential distribution (see \cite{bonald2017performance} for $\coc$
and \cite{Ayesta2018} for $\cos$).

We shall consider two different performance metrics for the service
providers: {\em (i)} the {\em stationary waiting probability}, defined
as the steady state probability that an arriving job has to wait for
service; and {\em (ii)} the {\em stationary mean response time.} We
make the following important remark at this point.

\begin{remark}
	Since each server has its own waiting line, strictly speaking, the
	{\em standalone} systems are not Erlang-C systems. However, the $\cos$
	system in which copies are replicated to all the servers is indeed
	equivalent to an Erlang-C system \cite{Ayesta2018}. It thus makes
	sense to refer to the waiting probability in the standalone $\cos$
	system as the Erlang-C probability.
\end{remark}

In light of the above remark, the performance metrics of the Erlang-C
system will serve as another benchmark when highlighting the benefits
of partial sharing compared to the no sharing case.
For provider $P_i,$ the standalone Erlang-C probability
$C_i^{\text{s}}$ and the stationary mean response time
$D_i^{\text{s}}$ are given by
$$C_i^{\text{s}} =\frac{\rho_i^{N_i}}{\rho_i^{N_i}+{\big(1-\frac{\rho_i}{N_i}\big) \left(\sum\limits_{k=0}^{N_i-1}\frac{\rho_i^k}{k!}\right)N_i!}},\ 
D_i^{\text{s}}  =\frac{1}{\nu_i}+\frac{C_i^{\text{s}}}{\nu_i(N_i-\rho_i)}.$$

\subsection{Partial Sharing Policy}
We propose a partial resource sharing policy, where each of the
providers contributes some of its servers to a common pool. The
servers in this common pool can serve the jobs from both of the
service providers. Hence the system has three types of servers
depending on the types of jobs they can serve. We now formally
define the partial sharing policy.

The partial sharing policy is parametrized by $(k_1,k_2),$ where
$k_i \in \{0,1,2,\dots,N_i\}$ is the number of servers contributed by
provider $P_i$ to the common pool. Hence these $N_1+N_2$ servers are
classified in the following three separate pools.
\begin{itemize}
	\item  Dedicated servers of provider $P_1$: $N_1-k_1$ dedicated servers which can serve only jobs of provider $P_1$.
	\item  Dedicated servers of provider $P_2$: $N_2-k_2$ dedicated servers which can serve only jobs of provider $P_2.$
	\item Common pool: $k_1+k_2$ shared servers which can serve jobs from both providers $P_1$ and $P_2$. 
\end{itemize} 
\TRtext{
	\begin{figure}
		\centering \includegraphics[scale=.8]{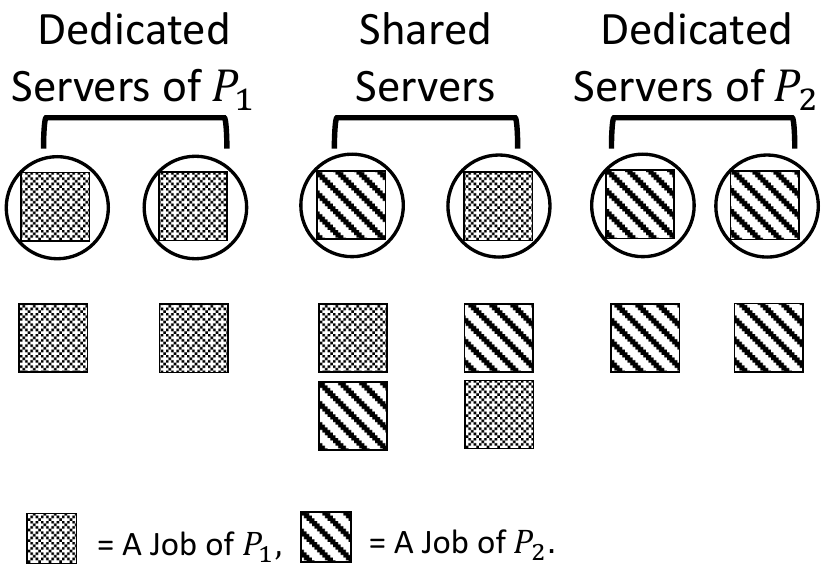} \caption{Partial
			sharing policy}
		\label{fig:bp-rateregion}
	\end{figure}
}
On arrival of a provider $i$ job into the system, copies of this job
are sent to all the $N_{i}+k_{-i}$ servers that can serve it, i.e., to
the $N_{i}-k_{i}$ dedicated servers of provider $P_i$ and
$k_{1}+k_{2}$ servers in the shared pool.

%

{\em Notations:} For a partial sharing configuration $(k_1,k_2)$, for
provider $P_i$, the waiting probability is denoted by $C_i(k_1,k_2)$
and the mean response time will be denoted by $D_i(k_1,k_2)$. To keep
the notation simple, we shall use the same notation for the two
performance metrics in both the $\coc$ as well as the $\cos$
systems. Since these two systems are treated in separate sections, no
confusion should arise.

\subsection{Pareto-frontier}

Each provider is assumed to be optimizing its own performance metrics,
i.e., for a provider to consider sharing its servers, it has to
benefit from doing so. To determine which partial sharing policy will
be acceptable to the providers, we use the concept of {\em
	Pareto-frontier} which is widely used in economics and multi-objective
optimization.

Let $B_i$ be either $C_i$ or $D_i$.  A policy/configuration $(k_1,
k_2)$ is said to be {\em Pareto-optimal}, if,
\begin{enumerate}
	\item $B_i(k_1,k_2) < B_i(0,0)$ for $i = 1,2,$ and 
	\item there does not exist another policy $(l_1, l_2)$ such that
	$B_i(l_1, l_2) \leq B_i(k_1, k_2),$ for $ i=1,2,$ with strict
	inequality for at least one $i$.
\end{enumerate}

That is, a Pareto-optimal configuration is one that results in
improved performance for each provider, and for which there does not
exist any other policy that is better for both the
providers.\footnote{The first condition is not part of the standard
	definition of Pareto optimality. However, in the present context,
	since Pareto-optimal configurations are meant to capture possible
	agreement points between the providers, it is natural to impose this
	condition of individual rationality.} The Pareto-frontier,
$\mathcal{P}$ is defined as the set of all Pareto optimal policies and
is the set of policies for which both providers benefit individually
compared to policies outside this set. Within the set, two policies
are not comparable since one provider gains while the other
loses. Existence of a non-empty $\mathcal{P}$ implies that partial
sharing can benefit both providers when compared to not sharing. In
the next two sections, we show that $\mathcal{P}$ is indeed non-empty
for both $\coc$ and $\cos$ systems.

\section{Partial pooling via cancel-on-complete replication}
\label{section:coc}

In this section, we explore partial pooling via the resource sharing
mechanism described in Section~\ref{section:system_model} with
\emph{cancel-on-complete} ($\coc$) replication.
Specifically, each incoming job of provider~$i$ releases a replica of
the job to all eligible servers (the $N_i - k_i$ servers in the
dedicated pool of $P_i$, and the $k_1 + k_2$ servers in the common
pool) upon arrival. The sizes of these replicas are assumed to be
i.i.d. and exponentially distributed with mean~$1/\nu_i$ (we comment
on this assumption later). The job gets completed when the first of
its replicas completes service, at which point the remaining replicas
are cancelled.

Our main result is that under $\coc$ replication, complete pooling
(i.e., $k_i = N_i\ \forall\ i$) is the \emph{only} Pareto-optimal
partial sharing configuration between the providers for the mean
response time metric. In other words, full pooling is not just optimal
for overall system performance, but also \emph{individually optimal}
from the standpoint of each provider. Thus, complete pooling is the
only reasonable configuration that the providers would agree upon in a
bargaining setting. In contrast, under \emph{cancel-on-start}
replication (discussed in Section~\ref{section:cos}), the
Pareto-frontier is a continuum of partial sharing configurations,
which may not include the complete pooling configuration.

In the following, we first obtain, by invoking recent results by
Bonald et al. (see \cite{bonald2017performance}), an expression for
the steady state mean response time under our partial pooling model
with $\coc$ replication. This enables us to characterize certain
monotonicity properties of the mean response time in the sharing
parameters. Finally, using these monotonicity properties, we determine
the Pareto-frontier of sharing configurations.

\begin{table*}[th]
	\centering
	\caption{Comparison of mean response time for the case
		$N_1 = N_2 = N,$ the standalone waiting probabilities of $S_1$
		and $S_2$ without replication being 5\% and 10\%, respectively and $\nu_1=\nu_2=1$.}
	\label{tb:bo_same_size}
	
	\begin{tabular}{|l|c|c|c|c|c|c|}
		\hline
		$N$  &\multicolumn{2}{|c|}{$D_1$}&\multicolumn{2}{|c|}{$D_2$}&\multicolumn{2}{|c|}{$D_1=D_2=D$}\\
		\hline
		&  Standalone system& Standalone system &  Standalone system& Standalone system&  Full Sharing & Naive 
		\\
		&   with \emph{c.o.c.} replication & without replication
		&with \emph{c.o.c.} replication & without replication &   with \emph{c.o.c.} & Full Sharing 
		\\\hline
		
		5  & 0.3231&1.0161 & 0.3722&1.0372 & 0.1730&1.0220  \\
		10 & 0.2121&1.0106 & 0.2491&1.0249 & 0.1145& 1.0015 \\
		15 & 0.1679&1.0084 & 0.1988&1.0199&  0.0910&   1.0012 \\
		20 & 0.1428&1.0071 & 0.1699&1.0170&  0.0776& 1.0011   \\
		\hline
	\end{tabular}
\end{table*}

\subsection{Performance characterization under $\coc$ replication}

In \cite{bonald2017performance}, Bonald et al. establish an
equivalence between the distribution of the steady state system
occupancy vector in a multiclass $\coc$ queueing system, and that in a
single server system with balanced fair scheduling. Specifically, for
a given partial sharing configuration $(k_1,k_2),$ consider the
following two systems.

\noindent $\boldsymbol{\mathcal{S}1}$:
$\mathcal{S}1$ is a multiclass queueing system with two classes
(corresponding to the two providers) and $N_1 + N_2$ servers. The
servers are identical and have a unit service rate. Jobs of
class~$i$ (i.e., corresponding to Provider~$i$) can be served on the
$N_i-k_i$ dedicated servers of provider~$i$ as well as on the
$k_1+k_2$ servers in the common pool. An incoming job is replicated
on all eligible servers in $\coc$ mode.

\noindent $\boldsymbol{\mathcal{S}2}$:
$\mathcal{S}2$ is a two-class single server system. The two job
classes correspond to the two providers, and the system maintains a
separate queue for the active jobs of either class. The server has a
service rate of $N_1 + N_2.$ Let $n_i$ denote the number of jobs of
class~$i$ in the system. For a given system state~$(n_1,n_2),$ each
class $i$ is allotted a service rate $r_i,$ where $(r_1,r_2)$ is the
balanced fair rate allocation corresponding to the polymatroidal
rate region $\mathcal{R}(k_1,k_2)$ defined as follows:
\begin{align*}
\mathcal{R}(k_1,k_2) &= \{(r_1,r_2) \in \R_+^2:\ r_1\leq N_1+k_2,\; \\
& \qquad r_2\leq N_2+k_1,\; r_1+r_2 \leq N_1+N_2 \}.
\end{align*}
\TRtext{The above rate region is also depicted in
	Figure~\ref{fig:bf-rateregion}.} We refer the reader
to \cite{bonald2003insensitive} for a detailed description of balanced
fair scheduling. Proposition~1 in~\cite{bonald2017performance} shows that the steady
state average system occupancies in $\mathcal{S}1$ and $\mathcal{S}2$ coincide. From this
equivalence, and using known results for the steady state mean
response times under balanced fair scheduling
(see \cite{bonald2003insensitive}), one obtains the following
characterization of the mean response time for our partial pooling
model under $\coc$ replication.
\TRtext{
	\begin{figure}[t]
		\centering
		\includegraphics[width = 0.3\textwidth]{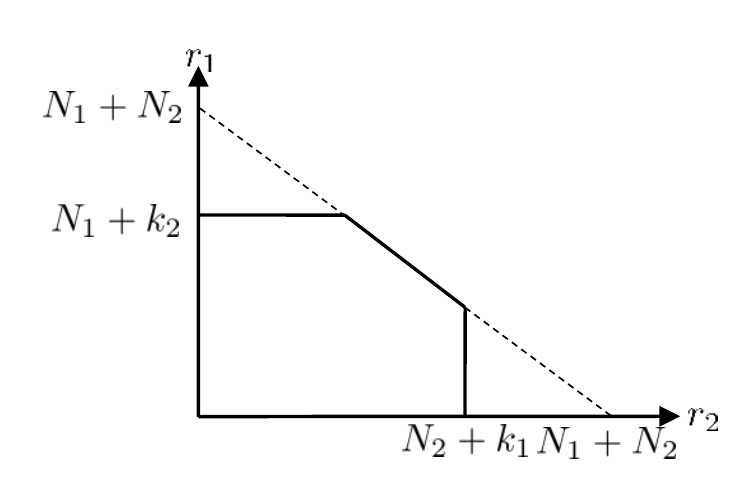}
		\caption{Rate region $\mathcal{R}(k_1,k_2)$ for BF-based partial pooling}
		\label{fig:bf-rateregion}
	\end{figure}
}
\ignore{
	We invoke the result from (,), which states that the state space
	distribution of a system under partial sharing configuration
	$(k_1,k_2)$ with \emph{c.o.c.} is same as that of a balanced fair
	scheduling policy \cite{bonald2003insensitive} described as
	follows:\\\\
	\textit{Balanced Fair Scheduling}: Let us consider the case where the
	servers of both providers can be combined into a single server with
	service rate $N_1+N_2$. Moreover, the service capacity of this merged
	server is arbitrarily (and dynamically) divisible between the two
	providers. We formulate a partial pooling model for this case based on
	balanced fair scheduling.
	
	The BF-based partial pooling mechanism is parametrized by $(k_1,k_2),$
	where $k_i \in [0, N_i]$ is a measure of the extent to which $S_i$ is
	willing to share its service capacity with $S_{-i}.$ Specifically, let
	$n_i$ denote the number of unfinished jobs of $S_i$ in the
	system. Given the system state $(n_1,n_2),$ the BF-based partial
	pooling mechanism awards a service rate $r_i$ to Provider~$i,$ such
	that $(r_1,r_2)$ is the balanced fair allocation over the
	polymatroidal rate region $\mathcal{R}(k_1,k_2)$ defined as follows:
	\begin{align*}
	\mathcal{R}(k_1,k_2) = \{(r_1,r_2) \in &\R_+^2:\ r_1\leq N_1+k_2,\;
	r_2\leq N_2+k_1,\;\\
	& \quad r_1+r_2 \leq N_1+N_2 \}.
	\end{align*}
	Further, within each queue $i,$ the allocated service rate $r_i$ is
	split equally among the waiting jobs, i.e., in a processor sharing
	manner.
	
	The rate region is also depicted in
	Figure~\ref{fig:bf-rateregion}. Note that $k_i$ can be interpreted as
	the maximum additional service rate made available to $S_{-i}$ by
	$S_i.$ Thus, a larger value of $k_i$ implies a greater willingness on
	the part of $S_i$ to share its service capacity (dynamically) with
	$S_{-i}.$
	
	We now give a brief description on how the balanced fair rate
	allocation is determined over the rate region $\mathcal{R}(k_1,k_2)$;
	the interested reader is referred to \cite{bonald2003insensitive} for
	more details. The rate allocation is defined in terms of the balance
	function. $\Phi(0,0)=1$ and $\Phi(n_1,n_2),$ which is in turn defined
	recursively as follows:
	\begin{align}\label{eq:balance function}
	\Phi(n_1,n_2)\nonumber
	=\begin{cases}
	\frac{\Phi(n_1-1,n_2)}{N_1+k_2}, \text{ if } n_1 > 0 \text{ or }	n_2 =0\\
	\frac{\Phi(n_1,n_2-1)}{N_2+k_1}, \text{ if } n_1=0 \text{ and } n_2>0\\
	\frac{\Phi(n_1-1,n_2)+\Phi(n_1,n_2-1)}{N_1+N_2}, \text{ otherwise.}
	\end{cases}
	\end{align}
	The rate allocation is then given by
	$$r_1(n_1,n_2)=\frac{\Phi(n_1-1,n_2)}{\Phi(n_1,n_2)}, \;\;r_2(n_1,n_2)=\frac{\Phi(n_1,n_2-1)}{\Phi(n_1,n_2)}.$$It can be shown
	that the above allocation is efficient (i.e., it results in an
	allocation on the outer boundary of the rate region), and induces a
	product-form stationary distribution on the queue occupancy vector
	$(n_1,n_2)$ that is insensitive to the job size distributions beyond
	their means \cite{bonald2003insensitive}.
}

\ignore{
	
	In Lemma \ref{lemma:coc_erlangc}, we provide the expression for
	Erlang-C probability under partial configuration $(k_1,k_2)$
	with \emph{c.o.c.}
	\begin{lemma}\label{lemma:coc_erlangc}
		For a partial configuration $(k_1,k_2)$ with \emph{c.o.c.} policy, the waiting probability is given by:
		\begin{align*}
		C_i(k_1,k_2)&=1 -\frac{1}{G(k_1,k_2)}\frac{N_{-i}+k_i}{N_{-i}+k_i-\rho_{-i}}
		\end{align*}
		where,
		\begin{align*}
		G(k_1,k_2)=\frac{1}{1-\frac{\rho_1+\rho_2}{N_1+N_2}}\bigg(\frac{1-\frac{\rho_1}{N_1+k_2}}{1-\frac{\rho_1}{N_1+N_2}}+\frac{1-\frac{\rho_2}{N_2+k_1}}{1-\frac{\rho_2}{N_1+N_2}}-1\bigg)
		\end{align*}
	\end{lemma}
	\begin{proof}
		Please see Appendix \ref{appendix:coc_erlangc}
	\end{proof}
}

\begin{lemma}\label{lemma:coc_waiting}
	For partial sharing configuration $(k_1,k_2)$ with $\coc$
	replication, the steady state mean response time corresponding to jobs of provider~$i$, denoted $D_i(k_1,k_2)$ is given by
	$D_i(k_1,k_2)=\nu_i^{-1}\Bigg(\frac{1}{N_1+N_2-\rho_1-\rho_2}+\bigg(\frac{\left(1-\frac{\rho_i}{N_1+N_2}\right)\frac{1}{N_i+k_{-i}}}{\left(1-\frac{\rho_i}{N_i+k_{-i}}\right)^2}
	-\frac{\frac{1}{N_1+N_2}}{1-\frac{\rho_i}{N_1+k_2}}\bigg)\left(\frac{1-\frac{\rho_1}{N_1+N_2}}{1-\frac{\rho_1}{N_1+k_2}}+\frac{1-\frac{\rho_2}{N_1+N_2}}{1-\frac{\rho_2}{N_2+k_1}}-1\right)^{-1}\Bigg).$
\end{lemma}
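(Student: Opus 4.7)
The plan is to leverage the equivalence established by Bonald et al., which reduces the $\coc$ multiclass system $\mathcal{S}1$ to the balanced-fair (BF) single-server system $\mathcal{S}2$ over the polymatroidal rate region $\mathcal{R}(k_1,k_2)$. Once this identification is in place, we never need to deal with the replication dynamics directly; the mean response time for each class in $\mathcal{S}1$ coincides with that in $\mathcal{S}2$, and the latter can be computed through the balance function of the BF allocation.

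First, I would write down the balance function $\Phi(n_1,n_2)$ for the two-class polymatroidal rate region using the recursion from \cite{bonald2003insensitive}:
\begin{align*}
\Phi(n_1,0) &= (N_1+k_2)^{-n_1}, \quad \Phi(0,n_2) = (N_2+k_1)^{-n_2}, \\
\Phi(n_1,n_2) &= \frac{\Phi(n_1-1,n_2)+\Phi(n_1,n_2-1)}{N_1+N_2} \quad (n_1,n_2 \geq 1).
\end{align*}
The stationary distribution is then $\pi(n_1,n_2) = \Phi(n_1,n_2)\,\rho_1^{n_1}\rho_2^{n_2}/G(k_1,k_2)$, with normalization $G(k_1,k_2) = \sum_{n_1,n_2 \geq 0} \Phi(n_1,n_2)\rho_1^{n_1}\rho_2^{n_2}$. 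The two-dimensional generating function $G(k_1,k_2)$ admits a closed form that is standard for polymatroidal BF systems; separating the boundary terms ($n_1=0$ or $n_2=0$) from the interior, solving the resulting two-variable linear recursion, and summing the geometric series yields
\[
G(k_1,k_2) \;=\; \frac{1}{1-\frac{\rho_1+\rho_2}{N_1+N_2}}\Bigg(\frac{1-\frac{\rho_1}{N_1+k_2}}{1-\frac{\rho_1}{N_1+N_2}} + \frac{1-\frac{\rho_2}{N_2+k_1}}{1-\frac{\rho_2}{N_1+N_2}} - 1\Bigg).
\]
This step is essentially the computation already indicated in the (commented-out) Lemma on the Erlang-C probability.

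Next, I would obtain the mean number of provider-$i$ jobs, $\mathbb{E}[n_i]$, by differentiating the generating function with respect to $\rho_i$: $\mathbb{E}[n_i] = \rho_i \,\partial_{\rho_i}\log G(k_1,k_2)$. Given the explicit form of $G$ above, this differentiation is mechanical and yields a rational expression in $\rho_1,\rho_2$ that matches the bracketed factor in the lemma statement (the $\left(1-\rho_i/(N_i+k_{-i})\right)^{-2}$ term arises from differentiating the class-$i$ boundary contribution, while the common-denominator factor comes from differentiating the normalizing prefactor of $G$). Finally, applying Little's law in the form $D_i(k_1,k_2) = \mathbb{E}[n_i]/\lambda_i = \mathbb{E}[n_i]/(\rho_i \nu_i)$ converts the mean occupancy into the mean response time, producing the stated formula.

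The main obstacle, and the only real work, is the algebraic bookkeeping in the differentiation step: keeping track of which factors depend on $\rho_i$ versus $\rho_{-i}$ (both the boundary term $1-\rho_i/(N_i+k_{-i})$ and the traffic-load term $1 - (\rho_1+\rho_2)/(N_1+N_2)$ contribute), and then simplifying so that the $\nu_i^{-1}/(N_1+N_2-\rho_1-\rho_2)$ contribution from the $\rho_1+\rho_2$ prefactor separates out cleanly. Once this bookkeeping is done, the remaining pieces combine into the claimed expression, with no additional probabilistic input beyond the Bonald et al. equivalence and the standard BF stationary analysis.
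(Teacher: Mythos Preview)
Your proposal is correct and follows essentially the same route as the paper: both invoke the Bonald et al.\ equivalence between the $\coc$ system and balanced-fair scheduling on $\mathcal{R}(k_1,k_2)$, compute the normalizing constant $G(k_1,k_2)$ (the paper via the subset recursion $G_{\mathcal{A}}=\frac{\sum_{i\in\mathcal{A}}\rho_i G_{\mathcal{A}\setminus i}}{\mu(\mathcal{A})-\sum_{j\in\mathcal{A}}\rho_j}$ from \cite{Virag2014}, you via direct summation of the balance function, arriving at the same closed form), and then obtain $D_i$ by differentiating $G$ in $\rho_i$ and applying Little's law. The only cosmetic difference is that the paper packages the last two steps by quoting the identity $D_i=\nu_i^{-1}\,\partial_{\rho_i}G/G$ from \cite{Virag2014}, whereas you spell out $\mathbb{E}[n_i]=\rho_i\,\partial_{\rho_i}\log G$ and then divide by $\lambda_i$.
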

\TRtext{The proof of Lemma \ref{lemma:coc_waiting} can be found in Appendix \ref{app:coc_waiting}.} \conftext{The proof of Lemma \ref{lemma:coc_waiting} can be found in \cite{Techrep}.} Given the above performance characterization, we now analyse the Pareto-frontier of partial sharing configurations.

\subsection{Pareto-optimal sharing configurations}

We first state the following monotonicity properties of the mean
response time under $\coc$ replication with respect to the sharing
paramters $k_1$ and $k_2.$

\begin{lemma} \label{lemma:coc_mono} Under $\coc$ replication,
	\begin{enumerate}
		\item $D_i(k_1,k_2)$ is a strictly decreasing function of $k_{-i},$
		\item $D_i(k_1,k_2)$ is a strictly increasing function of $k_{i}$ when $k_{-i} < N_{-i},$
		\item $D_i(k_1,k_2)$ is insensitive in $k_{i}$ when $k_{-i} = N_{-i}.$
	\end{enumerate}
\end{lemma}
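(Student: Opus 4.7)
The plan is to work directly from the closed-form expression of Lemma~\ref{lemma:coc_waiting}, which we rewrite as
\[
D_i(k_1,k_2) \;=\; \frac{1}{\nu_i(N_1+N_2-\rho_1-\rho_2)} \;+\; \frac{1}{\nu_i}\,\frac{A(k_{-i})}{G(k_1,k_2)},
\]
where $A$ denotes the inner bracket and $G$ the normalization factor of Lemma~\ref{lemma:coc_waiting}. The structural observation that drives the proof is that $A$ depends on $(k_1,k_2)$ \emph{only through $k_{-i}$}, while $G$ depends on both.

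I would begin by simplifying $A$: placing the two fractions over a common denominator and using the identity $(1-\rho_i/(N_1+N_2))(N_1+N_2) - (N_i+k_{-i})(1-\rho_i/(N_i+k_{-i})) = N_{-i}-k_{-i}$, routine algebra gives
\[
A(k_{-i}) \;=\; \frac{(N_{-i}-k_{-i})(N_i+k_{-i})}{(N_1+N_2)\,(N_i+k_{-i}-\rho_i)^{2}} \;\geq\; 0,
\]
with equality iff $k_{-i}=N_{-i}$. In parallel I would write $G = h_i(N_i+k_{-i}) + h_{-i}(N_{-i}+k_i) - 1$, where $h_j(x) := (N_1+N_2-\rho_j)\,x/[(N_1+N_2)(x-\rho_j)]$ is strictly positive on $[N_j,N_1+N_2]$, strictly decreasing in $x$, and satisfies $h_j(N_1+N_2)=1$ and $h_j(x)\geq 1$ throughout. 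Consequently $G>0$ and $G$ is strictly decreasing in each of $k_1,k_2$.

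Claims~3 and~2 then fall out immediately. If $k_{-i}=N_{-i}$ then $A=0$ and $D_i$ collapses to the first summand, which is free of $k_i$. If $k_{-i}<N_{-i}$ then $A>0$, and because only $G$ depends on $k_i$,
\[
\frac{\partial D_i}{\partial k_i} \;=\; -\,\frac{A(k_{-i})}{\nu_i G^{2}}\,\frac{\partial G}{\partial k_i} \;>\; 0,
\]
using $\partial G/\partial k_i = h_{-i}'(N_{-i}+k_i)<0$.

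The main obstacle is claim~1, since $A$ and $G$ both decrease in $k_{-i}$ and the sign of the derivative of $A/G$ is not manifest. I would handle this by factoring
\[
\frac{A(k_{-i})}{G(k_1,k_2)} \;=\; B(x)\cdot\frac{h_i(x)}{G(x,y)}, \qquad x := N_i+k_{-i},\; y := N_{-i}+k_i,
\]
where $B(x) := A(k_{-i})/h_i(x) = (N_1+N_2-x)/[(N_1+N_2-\rho_i)(x-\rho_i)]$. A one-line derivative shows $B'(x)<0$. For the second factor, $h_i(x)/G(x,y) = 1/(1+(h_{-i}(y)-1)/h_i(x))$; since $h_i(x)$ is strictly decreasing in $x$ and $h_{-i}(y)-1\geq 0$, this factor is non-increasing in $x$. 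Both factors are strictly positive, so their product $A/G$ is strictly decreasing in $x$, i.e.\ strictly decreasing in $k_{-i}$, establishing claim~1.
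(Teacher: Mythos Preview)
Your argument is correct. Claims~2 and~3 are handled exactly as in the paper: you isolate that the numerator $A$ depends only on $k_{-i}$, vanishing iff $k_{-i}=N_{-i}$, while the denominator $G$ is strictly decreasing in each argument; this is precisely the paper's reasoning (the paper writes $H$ for your $G$).

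For claim~1, however, your route differs from the paper's. The paper does \emph{not} analyse $A/G$ directly. Instead it introduces the overall mean response time $D=\frac{\lambda_1}{\lambda_1+\lambda_2}D_1+\frac{\lambda_2}{\lambda_1+\lambda_2}D_2$, computes $\partial D/\partial k_2<0$ by a brute-force differentiation, and then combines this with $\partial D_2/\partial k_2\geq 0$ (already established via claims~2 and~3) to conclude $\partial D_1/\partial k_2<0$ from the linear relation between $D$, $D_1$, $D_2$. Your approach---factoring $A/G=B(x)\cdot h_i(x)/G$ with $B(x)=(N_1+N_2-x)/[(N_1+N_2-\rho_i)(x-\rho_i)]$ and observing that the first factor is strictly decreasing while the second is non-increasing---is more elementary and entirely self-contained: it avoids the somewhat heavy derivative the paper carries out and does not rely on claims~2--3 as inputs. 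The paper's detour does yield, as a by-product, that the \emph{aggregate} mean response time is decreasing in each sharing parameter, a fact your proof does not surface (though it is not used elsewhere in the paper).
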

\conftext{The proof of
	Lemma \ref{lemma:coc_mono} can be found in \cite{Techrep}.}\TRtext{The proof of Lemma~\ref{lemma:coc_mono} can be found in
	Appendix~\ref{appendix:coc_mono}.} The first
statement of the lemma states that $P_i$ benefits from additional
servers contributed by $P_{-i}$ to the common pool. Statement~(2)
and~(3) deal with the impact of $P_i$'s contribution to the common
pool on its own performance. Interestingly, this dependence depends on
the extent to $P_{-i}$'s contribution. When $P_{-i}$ does not
contribute all its servers to the common pool (i.e., $k_{-i} <
N_{-i}$), $P_i$'s own performance deteriorates as it contributes
additional servers to the common pool. However, if $P_{-i}$ has
contributes all its servers to the common pool (i.e., $k_{-i} =
N_{-i}$), then $P_i$'s performance is insensitive to its own
contribution to the common pool. This insensitivity plays a crucial
role in determining the Pareto-frontier of partial sharing
configurations.

\begin{theorem}\label{thm:coc_pareto}
	Under $\coc$ replication, complete pooling (i.e., $k_1=N_1,
	k_2=N_2$) is the only Pareto-optimal configuration.
\end{theorem}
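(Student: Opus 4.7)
The plan is to leverage Lemma~\ref{lemma:coc_mono} to show two things: (i) complete pooling $(N_1,N_2)$ is itself Pareto-optimal; and (ii) every other configuration $(k_1,k_2) \neq (N_1,N_2)$ is weakly dominated by $(N_1,N_2)$ and hence fails to be Pareto-optimal. The key observation, which drives the entire argument, is that the "bad" effect of provider $P_i$ sharing more of its own servers (statement~(2) of Lemma~\ref{lemma:coc_mono}) disappears precisely when $P_{-i}$ is already fully sharing (statement~(3)). This asymmetry is what makes full pooling the unique Pareto-optimal point.

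First I would verify individual rationality at $(N_1,N_2)$. Starting from $(0,0)$ and comparing with $(N_1,N_2)$ via the intermediate point $(0,N_2)$: statement~(1) gives $D_1(0,N_2) < D_1(0,0)$, and statement~(3) (applied with $k_2 = N_2$) gives $D_1(N_1,N_2) = D_1(0,N_2)$. Chaining, $D_1(N_1,N_2) < D_1(0,0)$, and symmetrically for $P_2$. So condition~(1) of Pareto-optimality is satisfied at complete pooling.

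Next I would show that $(N_1,N_2)$ weakly dominates any $(k_1,k_2) \neq (N_1,N_2)$, with strict inequality for at least one provider. For $P_1$, compare $(k_1,k_2)$ with $(N_1,N_2)$ via the intermediate $(k_1,N_2)$: by statement~(1), $D_1(k_1,N_2) \leq D_1(k_1,k_2)$ (strict if $k_2 < N_2$), and by statement~(3), $D_1(N_1,N_2) = D_1(k_1,N_2)$. A symmetric argument via $(N_1,k_2)$ gives $D_2(N_1,N_2) \leq D_2(k_1,k_2)$ (strict if $k_1 < N_1$). Since $(k_1,k_2) \neq (N_1,N_2)$, at least one of $k_1 < N_1$ or $k_2 < N_2$ holds, so at least one of the two inequalities is strict. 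Therefore $(k_1,k_2)$ is dominated by $(N_1,N_2)$ and cannot be Pareto-optimal.

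Finally I would check that $(N_1,N_2)$ itself cannot be dominated: any candidate $(l_1,l_2) \neq (N_1,N_2)$ is, by the previous paragraph, itself weakly dominated by $(N_1,N_2)$ with strict inequality for at least one provider, which rules out $(l_1,l_2)$ dominating $(N_1,N_2)$. Combined with the individual rationality check, $(N_1,N_2)$ is Pareto-optimal and is the unique such configuration. There is no serious technical obstacle here; the only subtlety is being careful to invoke statement~(3) (rather than~(2)) at each step so that moving to full pooling never hurts the provider making the additional contribution.
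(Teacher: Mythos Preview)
Your proposal is correct and follows essentially the same route as the paper: use Lemma~\ref{lemma:coc_mono} to chain $D_i(k_1,k_2) \geq D_i(\text{intermediate}) = D_i(N_1,N_2)$ via statements~(1) and~(3), and observe that at least one inequality is strict whenever $(k_1,k_2)\neq(N_1,N_2)$. If anything, your write-up is slightly more careful than the paper's, since you explicitly verify the individual-rationality condition $D_i(N_1,N_2) < D_i(0,0)$ and you attribute strictness to the correct parameter ($k_2 < N_2$ for $P_1$, $k_1 < N_1$ for $P_2$), whereas the paper's printed proof swaps these.
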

\begin{proof}
	
	It follows from Lemma \ref{lemma:coc_mono} that
	\begin{align*}
	D_1(k_1,k_2) & \stackrel{(a)}\geq D_1(k_1,N_2) = D_1(N_1,N_2), \\
	D_2(k_1,k_2) & \stackrel{(b)}\geq D_2(N_1,k_2) = D_2(N_1,N_2).
	\end{align*}
	Moreover, the inequality $(a)$ (respectively, $(b)$) is strict when
	$k_1 < N_1$ (respectively, $k_2 < N_2$).
	\ignore{
		The first property of Lemma \ref{lemma:coc_mono} implies that
		$$D_1(k_1,N_2)<D_1(k_1,k_2) \text{ and }
		D_2(N_1,k_2)<D_2(k_1,k_2),$$ while the third property from
		Lemma \ref{lemma:coc_mono} implies that
		$$D_1(N_1,N_2)=D_1(k_1,N_2) \text{ and
		}D_2(N_1,N_2)=D_2(N_1,k_2) $$ }
	Therefore, $$(N_1,N_2)=\{\arg\min D_1(k_1,k_2)\}\cap\{ \arg\min
	D_2(k_1,k_2)\}$$ i.e., \emph{only} at $k_1=N_1,$
	$k_2=N_2$ are both $D_1(k_1,k_2)$ and $D_2(k_1,k_2)$
	minimized. Equivalently, all partial sharing configurations except the full
	sharing configuration will have a higher mean response time
	for at least one of the service providers. Therefore, the full
	sharing configuration i.e., $k_1=N_1$ and $k_2=N_2$ is the only
	Pareto-optimal configuration.
\end{proof}

From a bargaining standpoint, Theorem~\ref{thm:coc_pareto} states that
complete pooling is the only mutually agreeable configuration the
providers would settle upon under $\coc$ replication. This is in stark
contrast to the case of $\cos$ replication (addressed in the following
section), where complete pooling may not even be
Pareto-optimal. Interestingly, under $\coc$ replication, full pooling
is also a Nash equilibrium between the providers.

We conclude with a couple of remarks on $\coc$ replication in the
context of partial pooling.

\begin{remark} When neither operator contributes to the pool,
	i.e., $(k_1,k_2) = (0,0)$, the system is not equivalent to two
	standalone Erlang-C systems (one for each operator).  At $(0,0)$,
	each operator replicates copies to all of its servers. It is not
	hard to show that $D_i(0,0) < D_i^{s},$ where $D_i^{s}$ refers to
	the mean response time in the standalone Erlang-C system.
\end{remark}
\begin{remark} The stationary waiting
	probability is not a particularly meaningful metric under $\coc$
	replication. Indeed, $\coc$ replication is often used in applications
	where the beginning of service is hard to detect.
	Moreover, the power of the $\coc$ redundancy system depends to quite
	some extent on the assumptions of independent copies and exponential
	service time distributions. When copies are being served in parallel
	on several servers, these two assumptions imply that the service rate
	of the job is equal to the sum of the rates of the servers. It is as
	though the servers had pooled their service rates to serve the job.
	This can considerably increase the service rate for a job and hence
	decrease the mean response time. It could thus happen that under
	complete pooling, a job enters into service later than in the
	standalone Erlang-C system (or in a partially pooled system), but
	finishes service earlier. This could result in the stationary
	probability of wait being higher in the completely pooled system, even
	though the mean response time is the lowest possible.
\end{remark}

\subsection{Numerical study}

We now present the results from some numerical experiments to
illustrate the benefits of resource pooling via $\coc$ replication. In
Table~\ref{tb:bo_same_size}, we present the results for the case $N_1
= N_2 = N,$ with the arrival rates set so that the standalone
(Erlang-C) probability of wait for $P_1$ and $P_2$ are 0.05 and 0.1,
respectively. We take $\nu_1 = \nu_2 = 1.$ Note that with complete
pooling, which is the only Pareto-optimal sharing configuration,
$D_1(N_1,N_2) = D_2(N_1,N_2).$ We compare the mean response time under
this Pareto-optimal configuration with the the mean response time in
the standalone (Erlang-C) systems, the case $k_1 = k_2 = 0$
(standalone $\coc$ replicated systems), as well as naive full pooling
without replication (an Erlang-C system with $N_1 + N_2$ servers, and
arrival rate $\lambda_1 + \lambda_2$). Note that complete resource
pooling with $\coc$ replication results in a considerable reduction of
mean response time for both providers, sometimes even by an order of
magnitude.

\section{Partial pooling via cancel-on-start replication}
\label{section:cos}

In this section, we explore partial resource pooling
via \emph{cancel-on-start} ($\cos$) replication. In this model,
incoming jobs of provider $i$ get replicated at all the $N_i + k_{-i}$
eligible servers as before. However, as soon as the first replica
begins service, the others get cancelled. Thus, under $\cos$
replication, only one replica actually begins service, making it more
attractive in applications where parallel processing of replicas is
either infeasible (as in call centers) or undesirable (e.g., because
the replicas have comparable size, making parallel processing of
replicas inefficient). Throughout this section, we assume that $\nu_1
= \nu_2 = \nu.$

Partial pooling under $\cos$ replication is equivalent to a
hypothetical \emph{join-the-least-workload} system, where each server
maintains a FCFS queue, and an incoming job gets dispatched on arrival
to that eligible server that has the least unfinished work.  An
alternative and equivalent view of our $\cos$-based model is the
following: All jobs (from both providers) wait in a single FCFS queue,
and each server processes the earliest arriving eligible job. This
latter view makes our model an instance of the multi-server,
multi-class system analysed by Visschers et
al. in \cite{Visschers2012}, for which a product-form description of
the stationary distribution is available. An example of this
equivalence with three servers is shown in Fig.~\ref{fig:cosequiv}.
\begin{figure}[t!]
	\centering
	\includegraphics[width=0.45\textwidth]{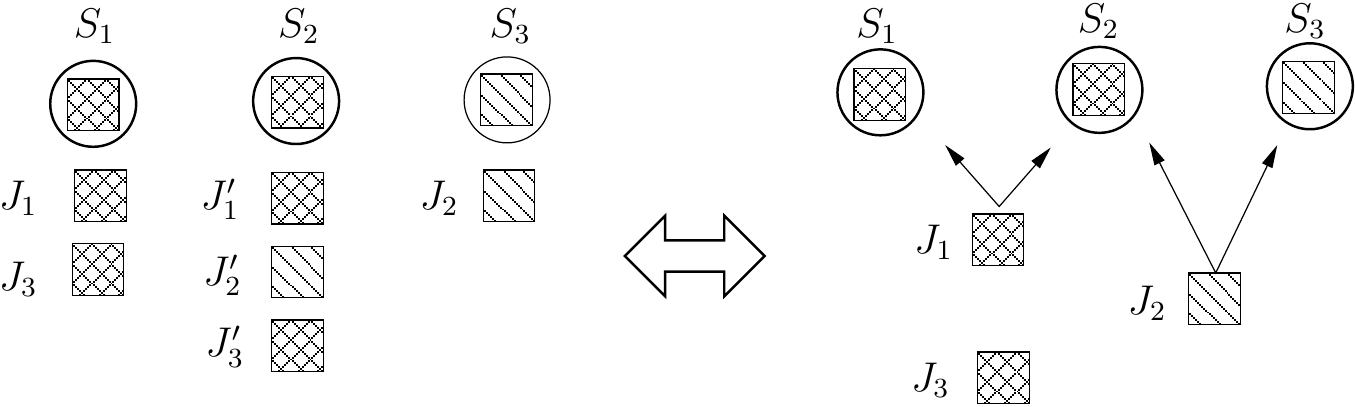}

	\caption{Equivalence of $\cos$ redundancy and a single FCFS queue. 
		The copy of a job is indicated by $'$. So, $J_1$ and $J_1'$ are copies of the same job.}
	\label{fig:cosequiv}
\end{figure}

The contributions of this section are as follows.

\noindent {\bf 1.} We use the framework
in \cite{Visschers2012} to obtain a characterization of the stationary
probability of waiting as well as the stationary mean response time
for each provider, under the $\cos$-based partial sharing model.

\noindent {\bf 2.} Since the expressions for the above performance metrics are
fairly involved, we are only able to analytically characterize the
Pareto-frontier for the stationary waiting probability metric when
$N_1 = N_2 = 1.$ In this case, by suitably extending the space of
sharing configurations to $[0,1]^2$ via time-sharing across
configurations in $\{0,1\}^2$, we show that Pareto-optimal sharing
configurations involve at least one provider \emph{always}
contributing its server to the common pool (i.e., $k_i = 1$ for
some~$i$). Intuitively, under efficient partial sharing
configurations, the more congested provider always places its server
in the common pool, whereas as the less congested provider places its
server in the common pool for some (long-run) fraction of time.

\noindent {\bf 3.} We conjecture that the above structure of the Pareto-frontier
holds for $N_1,N_2 \geq 1.$ This conjecture is validated via numerical
evaluations.

\noindent {\bf 4.}
Finally, we invoke the Kalai-Smorodinsky solution from bargaining
theory to capture the partial sharing configuration that the providers
would agree upon. Via numerical experiments, we demonstrate the
potential benefits from partial resource pooling between service
providers.

We begin with the following remark on the special cases of no pooling
and complete pooling.
\begin{remark}
	The case $k_1 = k_2 = 0$ corresponds to provider~$i$ performing $\cos$
	replication among its own $N_i$ servers, which is in turn equivalent
	to an $M/M/N_i$ (Erlang-C) system. Thus, $$C_i(0,0) =
	C^s(\rho_i,N_i) \qquad (i = 1,2).$$ Complete pooling corresponds to an
	aggregated Erlang-C, which implies that $$C_1(N_1,N_2) = C_2(N_1,N_2)
	= C^s(\rho_1 + \rho_2,N_1 + N_2).$$
\end{remark}
\jk{Erlang-C notation should be consistent with
	Section~2.}

Next, we formulate our partial pooling model with $\cos$ replication
in the framework of \cite{Visschers2012}.

\subsection{Performance characterization using the framework of \cite{Visschers2012}}

Let $\mathcal{M} = \{1,2,\cdots, N_1 + N_2\}$ denote the set of all
servers and $\{m_1,\cdots,m_{i-1},m_i\}$ be the set of busy
servers. The busy servers are labelled in increasing order of the
arrival times of the jobs they serve; for example,
$m_1 \in \mathcal{M}$ is the label of the server processing the
earliest arriving job in the system. Let $n_j$ denote the number of
jobs that cannot be served by servers in the set
$\mathcal{M} \setminus \{ m_1,\cdots,m_{j-1}, m_j\}$. Under the
formulation of \cite{Visschers2012}, the state $s$ of the system is
represented as $$s=(n_i,m_i,\cdots,n_2,m_2,n_1,m_1).$$

As an illustration of this state description, for the example in
Fig. \ref{fig:cosequiv}, the state of the system is
$(3,S_3,0,S_2,0,S_1)$.  This state description is actually an aggregated
one because out of the three waiting jobs two can only be served in $S_1$
and $S_3$. For a complete state description, the type of the job
indicating the servers in which it could be served should also have
been included. However, due to the assumption of Poisson arrivals, all
the arrival streams can be aggregated into a single stream, and each
arrival can be reassigned its type at the moment when a server becomes
free. We refer the reader to \cite{Visschers2012} for details of this state space description.

Using this state description, under a certain \emph{assignment rate
	condition} on how servers are selected when an arriving job finds more
than one idle eligible server, \cite{Visschers2012} establishes a
product form stationary distribution.

To state the assignment rate condition, we need to develop the
following notation. For $\{m_1,\cdots,m_{j-2},m_{j-1}\} \subset \M,$
and $m_j \notin \{m_1,\cdots,m_{j-2},m_{j-1}\},$ let
$\lambda_{m_j}(m_1,\cdots,m_{j-2},m_{j-1})$ denote the transition rate
from the state $(n_{j-1},m_{j-1},\cdots,n_1,m_1)$ to the state
$(0,m_j,n_{j-1},m_{j-1},\cdots,n_1,m_1).$ That is, $\lambda_{m_j}$ is
the rate at which an incoming job is sent to server $m_j$ when servers
$\{m_1,\cdots,m_{j-2},m_{j-1}\}$ are busy and the others are idle. The
assignment rate condition is the following:\\
\textbf{Assignment Rate Condition:} For $1 \leq i \leq N_1 + N_2$
and for every vector $(m_1,\cdots, m_i)$ composed of elements from
$\M,$
$$\prod_{j=1}^{i}\lambda_{m_j}(m_1,\cdots,m_{j-1})=\prod_{j=1}^{i}\lambda_{m_j}(\tilde{m}_1,\cdots,\tilde{m}_{j-1})$$
for every permutation $\tilde{m}_{1},\cdots,\tilde{m}_{i}$ of
$m_1,\cdots,m_{i}.$

That it is always possible to design assignment probabilities to free
eligible servers such that the above condition is satisfied is proved
in \cite{Visschers2012}. Let us define $\lambda(\lbrace
m_1,\cdots,m_{i-1},m_{i}\rbrace)$ as the aggregate arrival rate
corresponding to jobs which cannot be served by any server in the set
$ \mathcal{M} \setminus \lbrace m_1,\cdots,m_{i}\rbrace$. The
assignment rate condition implies the following product-form
stationary distribution.
\begin{theorem}[Theorem 2 in \cite{Visschers2012}]
	\label{thm:visschers}
	Assuming the assignment rate condition, the steady state probability
	for any state $s=(n_i,m_i,\cdots,n_2,m_2,n_1,m_1)$ is given
	by:
	$$\pi(s)=\alpha_i^{n_i}\cdots\alpha_1^{n_1}\frac{\Pi_\lambda(\lbrace m_1,\cdots,m_{i-1},m_{i}\rbrace)}{i!}\pi(0)$$
	where,$$\Pi_\lambda(\lbrace m_1,m_2,\cdots,m_i\rbrace)=\prod_{j=1}^{i}\lambda_{m_j}{(\lbrace m_1,\cdots,m_{j-2},m_{j-1}\rbrace)}$$
	$$\alpha_i= \frac{1}{i} \lambda(\lbrace m_1,\cdots,m_{i-1},m_{i}\rbrace)$$
\end{theorem}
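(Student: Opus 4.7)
The plan is to verify that the candidate product-form measure satisfies the global balance equations of the CTMC induced by Poisson arrivals and exponential service completions on the state space described above. The structural key is the assignment rate condition, which I would first exploit --- via a straightforward induction on $i$ --- to show that $\Pi_\lambda(\{m_1, \ldots, m_i\})$ depends only on the underlying \emph{set} $\{m_1, \ldots, m_i\}$ and not on the order in which these servers were activated. This permutation invariance is what allows the combinatorial factor $1/i!$ in the candidate to correctly account for the $i!$ activation orderings that produce the same busy set, and it is the indispensable input for any product-form description of this Markov chain.

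Next I would enumerate the three transition types out of a generic state $s = (n_i, m_i, \ldots, n_1, m_1)$: (a) an arrival that is ineligible for every idle server, which increments some $n_j$ and leaves the busy set unchanged; (b) a new-server activation, in which an arrival finds an idle eligible server $m_{i+1}$, yielding $(0, m_{i+1}, s)$ at rate $\lambda_{m_{i+1}}(m_1, \ldots, m_i)$; and (c) a service completion at rate $\nu$, which either admits a waiting job (decrementing an $n$-coordinate) or, if no waiting job is eligible for the departing server, removes it from the state and forces a re-indexing of the remaining coordinates. I would then try to establish partial balance one transition type at a time. The arrival/completion balance for (a) reduces directly to the defining identity $\alpha_j = \lambda(\{m_1, \ldots, m_j\})/j$, while the balance between (b) and its reverse collapses, after substitution of the product form, to the definitional recursion for $\Pi_\lambda$, hence holds automatically.

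The main obstacle will be the deactivation case in (c). When a completion removes a server from the busy set, several pre-completion states can map onto the same post-completion state, so the balance must be verified only \emph{in aggregate} by summing the candidate over all such parents. It is here that the set-symmetry of $\Pi_\lambda$ established in the first step, together with the combinatorial factor $1/i!$, must conspire to produce the correct aggregate flow into the child state. Once partial balance is verified for every transition type, global balance follows by summation, and irreducibility of the chain under the stability hypothesis $\rho_i < N_i$ yields uniqueness of the stationary distribution, completing the proof.
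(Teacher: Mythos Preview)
The paper does not prove this theorem at all: it is stated verbatim as ``Theorem~2 in \cite{Visschers2012}'' and invoked as a black box. There is therefore no paper-side proof to compare against; the authors simply cite the original source and then use the product form to compute the performance metrics in Appendix~\ref{app:cos_metrics}.

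Your proposal goes well beyond what the paper does, since you are sketching an actual proof. The approach you outline --- establishing permutation invariance of $\Pi_\lambda$ from the assignment rate condition, then verifying partial balance for arrivals that queue, arrivals that activate a server, and departures (with the deactivation case handled by aggregating over parent states) --- is indeed the standard route to product-form results of this kind and is essentially how Visschers et al.\ proceed in the cited work. One caution: the deactivation case is more delicate than your sketch suggests, because a completing server need not be $m_i$; it can be any $m_\ell$ in the busy set, and the resulting state involves a nontrivial reshuffling of the $(n_j,m_j)$ coordinates together with possible immediate re-activations of other servers by waiting jobs. Getting the bookkeeping right there is where most of the work lies, and it does not reduce to a clean partial-balance identity per transition; Visschers et al.\ handle it via a more global argument. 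Your plan is sound in outline, but be prepared for that step to require substantially more care than the others.
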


Using this stationary distribution, one can compute the waiting
probability as well as the mean response time for each partial sharing
configuration $(k_1,k_2);$ see Appendix~\ref{app:cos_metrics}. We note that while
there is some simplification achieved by specializing
Theorem~\ref{thm:visschers} to our two-class setting with three types
of servers (i.e., two dedicated pools and one common pool), the
expressions for the performance metrics, while amenable to numerical
computation, are too cumbersome for an analytical treatment of the
Pareto-frontier.

\ignore{
	\begin{lemma}
		For a system under partial sharing configuration $(k_1,k_2)$ with
		cancel-on-start, the Erlang-C probability $C_i(k_1,k_2)$ is given by:
		$$C_i(k_1,k_2)=\sum_{s \in \mathcal{P}_j} \pi(s)$$
		where,\begin{align*}
		\mathcal{P}_j = \{(n_i,m_i,\cdots,n_1,m_1):\mathcal{M}_j \subset (m_1,\cdots,m_{i-1},m_i)\}.
		\end{align*}
		and $\mathcal{M}_j$ is the set of server that can server a job of type $j$.
	\end{lemma}
	\begin{proof}
		$\mathcal{P}_j$ is the set of states where all the server that can
		serve a job of type $j$ are busy. Hence only for all states
		$s \in \mathcal{P}_j$, an arriving job of type $j$ has to wait, which
		gives us the above expression for the Erlang-C probability.
	\end{proof}
}

\subsection{Pareto-frontier of partial sharing configurations under $\cos$
	for  $N_1 = N_2 = 1$}

We now specialize to the case $N_1 = N_2 = 1,$ focusing on the
stationary waiting probability metric. To analyse the Pareto-frontier
with $\cos$ replication, we need to generalize the space of sharing
configurations $(k_1,k_2)$ to allow for real valued $k_i \in [0,1].$
We do this using randomization as follows. Provider $i$ contributes
its server to the common pool with probability $k_i;$ these actions
being independent across providers. Of course, the above probabilities
should really be interpreted as time-fractions. So the configuration
$(k_1,k_2)$ is achieved by time-sharing between the configurations
$(0,0),$ $(0,1),$ $(1,0)$ and $(1,1),$ with (long run) time-fractions
$(1-k_1)(1-k_2),$ $(1-k_1)k_2,$ $k_1(1-k_2),$ and $k_1k_2,$
respectively.

The following theorem provides a complete characterization of the
Pareto-frontier.
\begin{theorem}\label{thm:2servers}
	For $N_1=N_2=1,$ under $\cos$ replication, the Pareto-frontier is
	non-empty. Moreover, Pareto-optimal configurations for the stationary
	waiting probability metric satisfy $k_i = 1$ for some $i.$
	Specifically, the Pareto-frontier $\mathcal{P}$ is characterized as
	follows.
	\begin{enumerate} \item If $C_i(1,1) < C_i(0,0)$ $\forall\ i,$ then
		there exist uniquely defined constants $\hat{x}_1$ and $\hat{x}_2,$
		such that $\hat{x}_i \in (0,1)$ for $i = 1,2,$ $C_1(1,\hat{x}_2) =
		C_1(0,0),$ $ C_2(\hat{x}_1,1) = C_2(0,0).$ In this case,
		$$\mathcal{P} = \{(x,1):\ x \in(\hat{x}_1,1]\} \cup \{(1,x):\
		x \in(\hat{x}_2,1]\}.$$ \item If $C_2(0,0) \leq C_1(1,1) =
		C_2(1,1) < C_1(0,0),$ then there exist uniquely defined constants
		$\underline{x}_2$ and $\bar{x}_2$ satisfying $0 < \underline{x}_2
		< \bar{x}_2 \leq 1$ such that $C_1(1,\underline{x}_2) = C_1(0,0)$
		and $C_2(1,\bar{x}_2) = C_2(0,0).$ In this case,
		$$\mathcal{P} = \{(1,x):\ x \in
		(\underline{x}_2,\bar{x}_2)\}.$$
	\end{enumerate}
\end{theorem}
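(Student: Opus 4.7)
The plan is to reduce the analysis to the four corner configurations via the multilinear time-sharing extension, then combine monotonicity with individual rationality to pin down the Pareto-frontier.

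First I would apply Theorem~\ref{thm:visschers} at $N_1 = N_2 = 1$ to compute closed-form expressions for $C_i$ at each corner of $\{0,1\}^2$. The symmetric corners simplify: $(0,0)$ reduces to two standalone M/M/1 systems with $C_i(0,0) = \rho_i$, and $(1,1)$ reduces to a pooled M/M/2 Erlang-C system with $C_1(1,1) = C_2(1,1) = (\rho_1 + \rho_2)^2 / (2 + \rho_1 + \rho_2)$. The asymmetric corners $(1,0)$ and $(0,1)$ are multiclass two-server systems with one shared and one dedicated server, for which the Visschers product form yields explicit expressions. Since independent Bernoulli time-sharing makes $C_i(k_1, k_2)$ the multilinear interpolant of these four values, it is affine in each argument separately. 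Using the corner formulas I would then verify strict monotonicity: $C_i$ is strictly increasing in $k_i$ (unilateral sharing hurts one's own performance) and strictly decreasing in $k_{-i}$ (the counterparty's sharing helps), each claim reducing by affineness to two corner comparisons.

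The central step is to show that every Pareto-optimal configuration satisfies $k_i = 1$ for some $i$. Monotonicity immediately rules out the bottom edge, since $C_1(k_1, 0) > C_1(0,0)$ for $k_1 > 0$ violates $P_1$'s IR, and symmetrically the left edge for $P_2$; the origin itself fails the strict IR in the Pareto-optimality definition. For interior points $(k_1, k_2) \in (0,1)^2$ I would establish dominance by an explicit construction on the top or right edge: move to $(k_1, 1)$, which strictly improves $C_1$ while worsening $C_2$, and then along the top edge to the unique $k_1'$ satisfying $C_2(k_1', 1) = C_2(k_1, k_2)$; the affine structure together with strict monotonicity then forces $C_1(k_1', 1) < C_1(k_1, k_2)$, so $(k_1', 1)$ strictly dominates. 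When this recovery would require $k_1' > 1$---equivalently, when $C_2(k_1, k_2) < C_2(1,1)$---the symmetric construction via the right edge applies. Verifying that at least one of the two recoveries always fits inside $[0,1]^2$ is the key computational obstacle, and amounts to showing that no configuration simultaneously dominates $(1,1)$ in both coordinates; I would obtain this through a sign analysis of the mixed-partial term $C_i(0,0) - C_i(1,0) - C_i(0,1) + C_i(1,1)$ using the explicit Visschers expressions.

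Finally, the stated thresholds follow from imposing IR on the top and right edges. On the top edge, $C_2(k_1, 1)$ is affine in $k_1$ with endpoints $C_2(0,1) > \alpha_2 := C_2(0,0)$ and $C_2(1,1)$; under Case~1 the latter is strictly less than $\alpha_2$, yielding a unique $\hat{x}_1 \in (0,1)$ with $C_2(\hat{x}_1, 1) = \alpha_2$, while $P_1$'s IR holds throughout the edge since both endpoints of $C_1(\cdot, 1)$ lie strictly below $\alpha_1$ in Case~1. The symmetric analysis on the right edge yields $\hat{x}_2$ and completes the L-shaped frontier. Under Case~2, $C_2(k_1, 1)$ is a convex combination of two values both at least $\alpha_2$, so the top edge is entirely IR-infeasible for $P_2$ and only the right edge remains; the thresholds $\underline{x}_2$ (binding for $P_1$) and $\bar{x}_2$ (binding for $P_2$) then arise as the affine crossings, and $\underline{x}_2 < \bar{x}_2$ follows from $C_1(1,1) < C_1(0,0)$ combined with the strict monotonicity of $C_1$ and $C_2$ along the edge.
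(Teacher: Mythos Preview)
There is a real gap in your interior-point argument. You assert that after moving to $(k_1,1)$ and then sliding along the top edge to the unique $k_1'$ with $C_2(k_1',1)=C_2(k_1,k_2)$, ``the affine structure together with strict monotonicity then forces $C_1(k_1',1)<C_1(k_1,k_2)$.'' This does not follow from bilinearity and the four monotonicity signs alone. What is actually required is the Jacobian-type inequality
\[
\frac{\partial C_1}{\partial k_2}\,\frac{\partial C_2}{\partial k_1}\;>\;\frac{\partial C_1}{\partial k_1}\,\frac{\partial C_2}{\partial k_2}
\]
throughout $[0,1]^2$, which says precisely that $C_1$ strictly decreases along level curves of $C_2$ in the direction of increasing $k_1$. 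Without it your edge recovery can land at a point with \emph{larger} $C_1$: take $C_1=3k_1-k_2$ and $C_2=-k_1+2k_2$ (bilinear, with the required monotonicities) and start from $(0.1,0.9)$, where $C_1=-0.6$ and $C_2=1.7$; the top-edge recovery is $(0.3,1)$, which matches $C_2=1.7$ but has $C_1=-0.1>-0.6$. The needed inequality is exactly Statement~(3) of the paper's Lemma~\ref{lemma:cos_mono}, proved there by writing $\partial_{k_2}C_1\,\partial_{k_1}C_2-\partial_{k_1}C_1\,\partial_{k_2}C_2=\alpha k_1+\beta k_2+\gamma$ and checking $\alpha,\beta,\gamma>0$ from the explicit corner values. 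It is \emph{not} the mixed-partial term $C_i(0,0)-C_i(1,0)-C_i(0,1)+C_i(1,1)$ that you propose to analyse; that quantity bears on whether the recovered $k_1'$ lands inside $[0,1]$, which is a separate and weaker issue.

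Once the Jacobian inequality is established, the paper's route is also more direct than your two-step edge construction: at any interior point the inequality guarantees a slope $\theta$ with $-\partial_{k_1}C_1/\partial_{k_2}C_1<\theta<-\partial_{k_1}C_2/\partial_{k_2}C_2$, so that $(1,\theta)$ is a simultaneous strict descent direction for both $C_1$ and $C_2$, immediately ruling out interior Pareto optimality and sidestepping the fitting question entirely. Your edge-recovery argument can be repaired, but only after first verifying this same inequality; as written, the key lemma is missing.
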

Theorem~\ref{thm:2servers} shows that Pareto-optimal configurations
always involve at least one of the providers always contributing its
server to the common pool. Moreover, the theorem also spells out the
exact structure of the Pareto-frontier. Case~(1) of the theorem
corresponds to the case where full pooling is beneficial to both
providers. In this case, the Pareto-frontier includes the full-pooling
configuration; see Figure~\ref{fig:n=1,case1} for an example of this
case. Case~(2) of the theorem applies to the asymmetric setting where
full pooling benefits $P_1$ but not $P_2.$ In this case, all
Pareto-optimal configurations involve the most congested provider
(i.e., $P_1$) always contributing its server to the common pool; see
Figure~\ref{fig:n=1,case2} for an example of this case. Note that the
third case where full pooling is beneficial to $P_2$ but not $P_1$ is
omitted in the theorem statement, since it can be recovered from
Case~(2) by interchanging the labels of the providers. The proof of
Theorem~\ref{thm:2servers} is provided in
Appendix~\ref{app:proof_2servers}.

We conjecture that the above structure of the Pareto-frontier holds
beyond the special case of $N_1 = N_2 = 1.$
\begin{conjecture}
	For $N_1,N_2 \geq 1,$ employing an extension of the space of partial
	sharing configurations to $[0,N_1]\times[0,N_2]$ via randomization as
	before, any Pareto-optimal configuration for the stationary waiting
	probability metric satisfies $k_i = N_i$ for some $i.$
\end{conjecture}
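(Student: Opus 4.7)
The plan is to prove the contrapositive: if a configuration $(k_1, k_2)$ satisfies $k_1 < N_1$ and $k_2 < N_2$, then it is not Pareto-optimal, i.e., it is strictly dominated by some other configuration in $[0, N_1] \times [0, N_2]$. The first step is to establish strict monotonicity of the stationary waiting probability: $C_i(k_1, k_2)$ is strictly decreasing in $k_{-i}$, and strictly increasing in $k_i$ whenever $k_{-i} < N_{-i}$. Monotonicity in $k_{-i}$ admits a coupling argument on the FCFS-equivalent formulation of Figure~\ref{fig:cosequiv}: a larger $k_{-i}$ strictly enlarges the set of servers eligible to serve a $P_i$-job, so its waiting time can only decrease. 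Monotonicity in $k_i$ captures the intuition that ceding one's own server to the common pool exposes it to competition from the other provider's jobs; this should follow from a sample-path coupling, or from direct analysis of the product-form stationary distribution in Theorem~\ref{thm:visschers}. By linearity of the metric in the randomization distribution, both monotonicities extend to the continuous extension.

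Given these monotonicities, fix $(k_1, k_2)$ with $k_1 < N_1$ and $k_2 < N_2$. The two ``boundary'' configurations $(k_1, N_2)$ and $(N_1, k_2)$ each dominate $(k_1, k_2)$ in one coordinate but not the other, since $C_1(k_1, N_2) < C_1(k_1, k_2) < C_1(N_1, k_2)$ and $C_2(N_1, k_2) < C_2(k_1, k_2) < C_2(k_1, N_2)$. I would then seek a dominating mixture of the form $\alpha(k_1, N_2) + (1-\alpha)(N_1, k_2)$ for some $\alpha \in (0, 1)$, whose performance equals $\alpha C_i(k_1, N_2) + (1-\alpha) C_i(N_1, k_2)$. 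A short computation shows that strict improvement in both metrics over $(k_1, k_2)$ requires $\alpha \in (\alpha_1^*, \alpha_2^*)$, where
\[
\alpha_1^* = \frac{C_1(N_1, k_2) - C_1(k_1, k_2)}{C_1(N_1, k_2) - C_1(k_1, N_2)}, \qquad
\alpha_2^* = \frac{C_2(k_1, k_2) - C_2(N_1, k_2)}{C_2(k_1, N_2) - C_2(N_1, k_2)}.
\]
Under the strict monotonicity of Step~1, both lie in $(0,1)$, and the conjecture reduces to establishing the strict inequality $\alpha_1^* < \alpha_2^*$.

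The hardest step will be verifying this inequality. After cross-multiplication, it becomes a submodular-type comparison among the four ``boundary'' performance values $C_i(k_1, N_2)$ and $C_i(N_1, k_2)$, $i = 1, 2$. Two avenues appear promising. The first is to derive the required comparison directly from Theorem~\ref{thm:visschers}, exploiting the product-form structure to obtain tractable ratios of stationary probabilities across the four relevant configurations. The second is to proceed by induction on $N_1 + N_2$, with Theorem~\ref{thm:2servers} serving as the base case; the inductive step would likely rely on a coupling between the $(N_1, N_2)$-system and the $(N_1-1, N_2)$- or $(N_1, N_2-1)$-systems, though transferring performance comparisons across such size changes is non-trivial in the $\cos$ setting. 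Should neither approach yield the inequality uniformly over all parameters, one might enlarge the dominator class beyond binary mixtures on the two boundary rays, e.g., allowing mixtures over several boundary configurations chosen so as to simultaneously dominate $(k_1, k_2)$ in both coordinates.
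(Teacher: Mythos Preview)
This statement is explicitly a \emph{conjecture} in the paper; the authors write that ``a proof has eluded us thus far.'' There is therefore no proof in the paper to compare against. Only the special case $N_1=N_2=1$ is established (Theorem~\ref{thm:2servers}), and the method there is not a mixture argument but a directional-derivative one: Lemma~\ref{lemma:cos_mono} verifies, by brute-force computation from the bilinear expression for $C_i(k_1,k_2)$, the cross-partial inequality $\frac{\partial C_1}{\partial k_2}\frac{\partial C_2}{\partial k_1} > \frac{\partial C_1}{\partial k_1}\frac{\partial C_2}{\partial k_2}$, and this inequality guarantees that at every interior point there is a direction $(1,\theta)$ along which both $C_1$ and $C_2$ strictly decrease. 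Extending that cross-partial inequality to general $N_1,N_2$ is exactly where the authors are stuck.

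Your proposal has a structural gap beyond the unproved monotonicity and the unproved inequality $\alpha_1^*<\alpha_2^*$. The ``dominating mixture'' $\alpha(k_1,N_2)+(1-\alpha)(N_1,k_2)$ with performance $\alpha C_i(k_1,N_2)+(1-\alpha)C_i(N_1,k_2)$ is not, in general, an element of the configuration space in which the conjecture is posed. The paper's extension to $[0,N_1]\times[0,N_2]$ is via \emph{independent} (product) randomization across providers, so $C_i(k_1,k_2)$ is multilinear in $(k_1,k_2)$, not affine; already for $N_1=N_2=1$ one has $C_i(k_1,k_2)=(1-k_1)(1-k_2)C_i(0,0)+\cdots+k_1k_2C_i(1,1)$. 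Your mixture is a \emph{correlated} randomization over two boundary configurations, which typically cannot be realized as a product randomization indexed by any single pair $(k_1',k_2')$. Hence even if $\alpha_1^*<\alpha_2^*$ held, you would only have produced a dominating point in a strictly larger randomization space, which does not show that $(k_1,k_2)$ fails to be Pareto-optimal within $[0,N_1]\times[0,N_2]$. The paper's gradient approach avoids this pitfall by perturbing within the product-parametrized surface itself; if you wish to pursue the conjecture, generalizing the cross-partial inequality of Lemma~\ref{lemma:cos_mono} is the more promising route.
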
 
Numerical experimentation suggests that the above conjecture holds;
however, a proof has eluded us thus far. See Figure~\ref{fig:larger_n}
for some illustrations of the Pareto-frontier computed for $N_1 = N_2
= 2.$
\begin{figure}[t]
	\centering
	\begin{subfigure}[t]{0.48\columnwidth}
		\centering
		\includegraphics[]{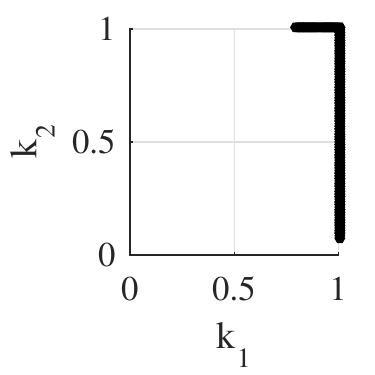}
		\caption{$C_1^s=30\%,C_2^s=10\%$}
		\label{fig:n=1,case1}
	\end{subfigure}%
	~
	\begin{subfigure}[t]{0.48\columnwidth}
		\centering
		\includegraphics[]{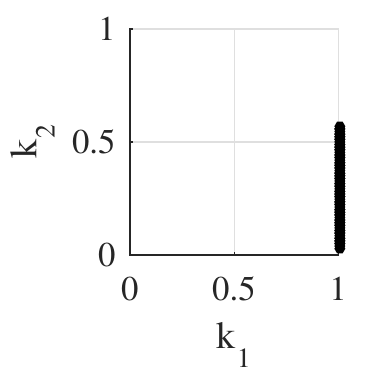}
		\caption{$C_1^s=50\%,C_2^s=10\%$}
		\label{fig:n=1,case2}
	\end{subfigure}%
	\caption{Pareto Frontier for $N_1=N_2=1$.}
	\label{fig:pareto}
\end{figure}
\begin{figure}[t]
	\centering
	\begin{subfigure}[t]{0.48\columnwidth}
		\centering
		\includegraphics[]{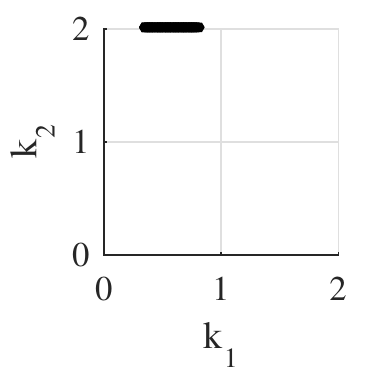}
		\caption{$C_1^s=10\%,C_2^s=50\%$}
		\label{fig:n=2,case1}
	\end{subfigure}%
	~
	\begin{subfigure}[t]{0.48\columnwidth}
		\centering
		\includegraphics[]{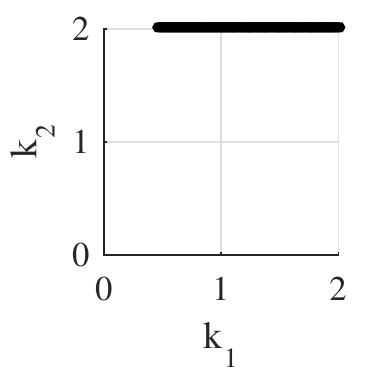}
		\caption{$C_1^s=20\%,C_2^s=50\%$}
		\label{fig:n=2,case2}
	\end{subfigure}%
	\caption{Pareto Frontier for $N_1=N_2=2$.}
	\label{fig:larger_n}
\end{figure}
\subsection{Bargaining solutions}

The Pareto-frontier defines the set of partial pooling configurations
from which one provider cannot improve its performance without
degrading that of the other. That is, one cannot find a configuration
that will be better for both providers simultaneously. Bargaining
theory provides a framework for choosing one configuration from the
set of Pareto-optimal ones \cite{Myerson2013}. While an extensive treatment of
bargaining solutions is beyond the scope of the present paper, we now
use one popular solution concept from the theory, namely, the
Kalai-Smorodinsky bargaining solution (KSBS) \cite{KS75}, to capture the
agreement point between the providers.

\begin{definition}
	A partial sharing configuration $(x_1^*,x_2^*)$ is a \emph{
		Kalai-Smorodinsky bargaining solution (KSBS)}, if
	$(x_1^*,x_2^*)$ is on the Pareto-frontier and satisfies
	\begin{align*}
	\frac{B_1(0,0) - B_1(x_1^*,x_2^*)}{ B_2(0,0) - B_2(x_1^*,x_2^*)} &~= \frac{B_1(0,0)
		- \min\limits_{y \in [0,1]^2} B_1(y_1,y_2)}{ B_2(0,0) - \min\limits_{y \in [0,1]^2} B_2(y_1,y_2)}. 
	\end{align*}
\end{definition}
The KSBS is such that the ratio of relative utilities of the providers
is equal to the ratio of their maximal relative utilities. It can be
shown (using arguments similar to those in \cite{FullVersion}) that for $N_1 =
N_2 = 1$ and for the stationary waiting probability metric, the KSBS
is uniquely defined. Table~\ref{tb:n=1} illustrates the KSBS computed for
various system parameters when $N_1 = N_2 = 1.$ We see that when the
providers are close to symmetric in their arrival rates (or
equivalently, their standalone waiting probabilities), the KSBS
corresponds to full pooling. On the other hand, when one provider is
much more congested than the other, the KSBS does not correspond to
complete pooling; only the more congested provider is required to
always place its server into the common pool. Note that the KSBS
affords a considerable improvement in performance for both providers.

\begin{table}[h]
	\centering
	\caption{Comparison of waiting probabilities (in \%) for the case $N_1 = N_2 = 1,$ respectively and $\nu_1=\nu_2=1$.}
	\label{tb:n=1}
	
	\begin{tabular}{|l|l|c|c|c|c|c|}
		\hline
		\multicolumn{2}{|c|}{Standalone}  &\multicolumn{1}{|c|}{Full Sharing }&\multicolumn{4}{|c|}{KSBS}\\
		\hline
		
		$C_1^s$&$C_2^s$ &$C_1=C_2=C$& $k_1^*$ &  $k_2^*$&$C_1(k_1^*,k_2^*)$&$C_2(k_1^*,k_2^*)$
		\\\hline
		
		10&10 &1.82 & 1&1 &1.82 &1.82  \\
		10&30 &6.65&0.69 &1 &5.54 &14.54 \\
		10&50 & 13.85 & 0.37&1&8.26 &37.30    \\
		\hline
	\end{tabular}
\end{table}

\section{Discussions}
\label{section:discussions}

In this section, we cover the special (and easier) single-server
setting, where the service capacities of both providers can be merged
into a single server. We also review the related literature, and
outline potential directions for future work.

\subsection{Single server setting}
First, we consider the special case where the service capacity of both
providers can be combined into a single server. Moreover, the service
capacity of this merged server is arbitrarily (and dynamically)
divisible between the two providers. For this case, we define a
balanced fairness (BF) based partial pooling mechanism between the
providers, which includes both no pooling and complete pooling as
special cases. For this mechanism, we show that complete pooling is
the only Pareto-optimal configuration. Of course, given the
equivalence between $\coc$ replication and single-server balanced-fair
scheduling, and our conclusions from Section~\ref{section:coc}, this
is not surprising.

Consider two service providers $P_1$ and $P_2$ with servers operating
at the rate (a.k.a. speed) $\mu_1$ and $\mu_2$ respectively. Recall
that in this section, we consider the `single server' setting, where
it is possible to pool the capacity of the two servers into a single
server with rate $\mu_1 + \mu_2.$ Jobs of $P_i$ arrive according to a
Poisson process of rate $\lambda_i$, and the service requirements
(a.k.a. sizes) of jobs are i.i.d. and exponentially distributed with
mean $1/\nu_i < \infty.$ Let $\rho_i := \lambda_i/\nu_i$ denote the
traffic load corresponding to $P_i.$ For stability, we assume that
$\rho_i < \mu_i$ for $i = 1,2.$

Jobs corresponding to each provider wait in separate queues, and are
served in a FCFS fashion. The service rate for each queue is
determined by a BF-based partial pooling mechanism, parametrized by
$(k_1,k_2).$ Here, $k_i \in [0, \mu_i]$ is a measure of the extent to
which $P_i$ is willing to share its service capacity with $P_{-i}.$
Specifically, let $n_i$ denote the number of unfinished jobs of $S_i$
in the system. Given the system state $(n_1,n_2),$ the BF-based
partial pooling mechanism awards a service rate $r_i$ to Provider~$i,$
such that $(r_1,r_2)$ is the balanced fair allocation over the rate
region $\mathcal{R}(k_1,k_2)$ defined as follows:
\begin{align*}
\mathcal{R}(k_1,k_2) = \{(r_1,r_2) \in\, &\R_+^2:\ r_1\leq \mu_1+k_2,\;
r_2\leq \mu_2+k_1,\;\\
& \quad r_1+r_2 \leq \mu_1+\mu_2 \}.
\end{align*}
\TRtext{The shape of above rate region is similar to the rate region in 
	Figure \ref{fig:bf-rateregion} with $N_i$ replace by $\mu_i$.}
\begin{theorem}\label{thm:single_server_pareto}
	In the single server setting, under the BF-based partial pooling
	mechanism, the full sharing configuration (i.e., $k_1=\mu_1,
	k_2=\mu_2$) is the only Pareto-optimal configuration, for the
	stationary waiting probability metric as well as the stationary mean
	response time metric.
\end{theorem}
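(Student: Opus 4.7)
My plan is to mirror the proof of Theorem~\ref{thm:coc_pareto}, exploiting the fact that the BF-based partial pooling mechanism on a single server with rate region $\mathcal{R}(k_1,k_2)$ induces the \emph{same} product-form stationary distribution on the class-occupancy vector $(n_1,n_2)$ as the $\coc$ multi-server system of Section~\ref{section:coc}, with $\mu_i$ playing the role of $N_i$. This is because the balance-function recursion, and hence the entire stationary distribution, depends only on the shape of the rate region, which is identical under the relabelling $N_i \leftrightarrow \mu_i$.

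For the mean response time metric this is immediate: Lemma~\ref{lemma:coc_waiting} and Lemma~\ref{lemma:coc_mono} apply verbatim with $N_i$ replaced by $\mu_i$, and the argument of Theorem~\ref{thm:coc_pareto} yields $(\mu_1,\mu_2)$ as the unique Pareto-optimal configuration. For the waiting-probability metric, I would interpret $C_i(k_1,k_2) := P(n_i \geq 1)$ (by PASTA, the probability that an arriving class-$i$ job finds a class-$i$ job already present). Using $\Phi(0,n_{-i}) = (\mu_{-i}+k_i)^{-n_{-i}}$, the product form gives
\begin{equation*}
C_i(k_1,k_2) = 1 - \frac{1}{G(k_1,k_2)} \cdot \frac{\mu_{-i}+k_i}{\mu_{-i}+k_i - \rho_{-i}},
\end{equation*}
where the normalizing constant $G(k_1,k_2)$ is obtained in closed form from the generating function of the balance function as
\begin{equation*}
G(k_1,k_2) = \frac{(\mu_1+k_2)(\mu_2+k_1)}{(\mu_1+k_2-\rho_1)(\mu_2+k_1-\rho_2)} + \frac{\rho_1\rho_2\,(k_1+k_2)}{(\mu_1+k_2-\rho_1)(\mu_2+k_1-\rho_2)(\mu_1+\mu_2-\rho_1-\rho_2)}.
\end{equation*}
The crux will then be to replicate Lemma~\ref{lemma:coc_mono} for $C_i$: that $C_i$ is strictly decreasing in $k_{-i}$, strictly increasing in $k_i$ when $k_{-i} < \mu_{-i}$, and insensitive to $k_i$ when $k_{-i} = \mu_{-i}$. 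Granted these properties, the argument of Theorem~\ref{thm:coc_pareto}---the chain $C_1(k_1,k_2) \geq C_1(k_1,\mu_2) = C_1(\mu_1,\mu_2)$ and its symmetric counterpart for $C_2$---forces $(\mu_1,\mu_2)$ to be the unique joint minimizer and completes the proof.

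The decisive property is the insensitivity in $k_i$ at $k_{-i} = \mu_{-i}$. Structurally, when $k_{-i} = \mu_{-i}$ the constraint $r_i \leq \mu_i + k_{-i}$ is implied by $r_1 + r_2 \leq \mu_1 + \mu_2$, so that face of the polymatroid disappears and the marginal law of $n_i$ should decouple from $k_i$. Algebraically, substituting $k_{-i} = \mu_{-i}$ into the closed form above, the factor $\mu_{-i}+k_i$ cancels between numerator and denominator, and the identity $c(c-\rho_1-\rho_2)+\rho_1\rho_2 = (c-\rho_1)(c-\rho_2)$ (with $c = \mu_1+\mu_2$) collapses $C_i(k_1,\mu_{-i})$ to $\rho_i/(\mu_1+\mu_2-\rho_{-i})$, which is manifestly independent of $k_i$. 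The strict monotonicities in $k_{-i}$ (decreasing) and in $k_i$ when $k_{-i}<\mu_{-i}$ (increasing) will follow by sign-checking partial derivatives of the same expression. The hardest part will be this algebra: the two $k_i$-dependent factors in $C_i$ pull in opposite directions, so tracking their net effect requires care, but once $G(k_1,k_2)$ is in closed form the verification is mechanical.
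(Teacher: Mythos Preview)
Your proposal is correct and, for the mean response time metric, is exactly the paper's (implicit) argument: the paper gives no standalone proof of Theorem~\ref{thm:single_server_pareto}, noting only that, given the equivalence between $\coc$ replication and single-server balanced-fair scheduling (via \cite{bonald2017performance}), the result is ``not surprising'' in light of Section~\ref{section:coc}. Your reduction to Lemmas~\ref{lemma:coc_waiting}--\ref{lemma:coc_mono} and Theorem~\ref{thm:coc_pareto} with $N_i\leftrightarrow\mu_i$ is precisely this.

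For the waiting-probability metric you actually go further than the paper, which offers no argument for that half of the statement (indeed, Remark~3 deems the waiting probability ``not particularly meaningful'' under $\coc$). Your closed form for $C_i$ and for $G(k_1,k_2)$ is correct and coincides with an expression the authors derived but excised from the final version; your verification of the insensitivity at $k_{-i}=\mu_{-i}$ via the identity $c(c-\rho_1-\rho_2)+\rho_1\rho_2=(c-\rho_1)(c-\rho_2)$ is clean, and the remaining strict monotonicities are indeed routine sign checks. One small point worth making explicit: in a single server with BF allocation every job begins receiving service immediately, so ``waiting probability'' has no literal meaning; your choice $C_i:=P(n_i\geq 1)$ is a natural surrogate (and evidently the authors' intended one), but you should state it as a definition rather than take it for granted.
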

Theorem~\ref{thm:single_server_pareto} implies that the resource
pooling problem is easy in the single server setting---complete
pooling with balanced fair resource allocation is the solution.



\subsection{Related work}

Complete resource pooling between independent service systems has been
studied from a cooperative game theory standpoint; see, for
example,~\cite{Gonzalez2004,Anily10,Karsten11b}. The goal of this
literature is to analyze stable mechanisms for sharing the surplus
(and costs) of the grand coalition among the various agents. Single
server as well as multi-server settings have been considered, for
queueing as well as loss systems; see \cite{Karsten11b} for a
comprehensive survey of this literature. A complementary view of
resource sharing comes from an optimization standpoint. Here the
organization is interested in optimally provisioning (potentially
heterogenous) resources and/or sharing its service resources between
various activities; see, for example,
\cite{Benjaafar1995,Wallace2005,Iravani2007}.

In contrast, our approach in this work is to analyze resource sharing
between strategic service providers having non-transferable
utility. In other words, no side-payments are allowed between the
service providers. Instead, providers would (partially or completely)
pool their resources with one another only if their service quality
improves in the process. Our goal is thus to devise mechanisms that
guarantee mutually beneficial sharing configurations. The only prior
work we are aware of that takes this view is \cite{FullVersion}, which
considers loss systems. In contrast, the present paper considers
service systems where jobs can be queued. As it turns out, the
appropriate sharing mechanisms are very different between these two
settings.

\subsection{Naive partial sharing}
In this paper, we analyzed partial sharing for systems with full
redundancy, i.e., copies are sent to all the servers. An alternative
system that would have been a natural generalization of the partial
sharing with blocking analyzed in \cite{FullVersion} would be to have
three separate queues---one each for the two dedicated servers and one
for the shared servers---and send an incoming job with priority to the
dedicated servers. Repacking could have been used to dynamically
exchange servers between the pools in order to ensure that the
stability region is the same as in this paper. Compared to the
redundancy system, the naive system presents an inconvenience in that
it is not easy to show whether or not it admits a product-form. For
the computation of the waiting probability, this is a major
disadvantage. One could in theory use sample-path techniques employed
in \cite{FullVersion} to characterize the Pareto-frontier, but these
become much more involved for the system in this paper and we are not
sure that they are even true. This is part of our ongoing work. 

\subsection{Future work}
An immediate avenue for future work is of course to prove Conjecture~1
and completely analyse the Pareto-frontier under $\cos$ replication;
this is currently being pursued. It would also be interesting to
generalize our models to $n > 2$ service providers, and to devise
suitable partial sharing mechanisms that guarantee the existence of
mutually beneficial sharing configurations.

More broadly, the present bargaining-centric view of resource pooling
(with non-transferable utilities between agents) can be explored in
other contexts; for example, sharing of cache memory between content
distribution systems, sharing of energy storage in smart power grids,
and spectrum sharing between cellular service providers or between
secondary users in cognitive radio networks.

\bibliographystyle{IEEEtrans}
\bibliography{main} 
\appendices

\TRtext{
	
	\section{Proof of Lemma ~\ref{lemma:coc_waiting}}
	\label{app:coc_waiting}
	We derive the expression for mean response time for a partially pooled system with sharing configuration $(k_1,k_2)$ with c.o.c. replication using the equivalence of its steady state system occupancy vector with that of balanced fair scheduling as mentioned in Section \ref{section:coc}. Theorem 4 in \cite{Virag2014} provides following general expression mean response time under balanced fair scheduling and thus also for for a partially pooled system with sharing configuration $(k_1,k_2)$ with c.o.c. replication.
	\begin{equation}\label{eq:mean_delay}
	D_i(k_1,k_2)=\frac{\nu_i \frac{\partial G(k_1,k_2)}{\partial \rho_i}}{G(k_1,k_2)}.
	\end{equation} 
	Given $G(k_1,k_2)$, by differentiating it wrt  $\rho_i$ and substituting it in Equation \ref{eq:mean_delay}, we get the expression for mean response time.
	
	$G(k_1,k_2)$ can be calculated using following equations stated in \cite{Virag2014}:
	Let $G_{\lbrace \phi\rbrace}=1$ and for $\mathcal{A} \in \{\{1\},\{2\},\{1,2\}\}$, $G_{\mathcal{A}}$ is defined as:
	$$G_{\mathcal{A}}=\frac{\sum_{i \in A}\rho_iG_{A-i}}{\mu(\mathcal{A})-\sum_{j \in \mathcal{A}}\rho_j}.$$
	$\mu(\mathcal{A})$ is the maximum service rate for the set of providers $\mathcal{A}$ in the balanced fair rate allocation as shown in Figure \ref{fig:bf-rateregion}. Hence,
	$$G_{\{ 1\}}=\frac{\rho_1}{N_1+k_2-\rho_1},
	G_{\{ 2\}}=\frac{\rho_2}{N_2+k_1-\rho_2},$$
	$$G_{\{ 1,2\}}=\frac{\rho_1 G_{\{ 2\}}+\rho_2G_{\{ 1\}}}{N_1+N_2-\rho_1-\rho_2}=\rho_1\rho_2\frac{\frac{1}{N_1-\rho_1}+{\frac{1}{N_2-\rho_2}}}{N_1+N_2-\rho_1-\rho_2}.$$\\
	$G(k_1,k_2)=G_{\lbrace \phi\rbrace}+G_{\lbrace1\rbrace}+G_{\lbrace2\rbrace}+G_{\lbrace1,2\rbrace}$. By substituting the values $G_{\{\phi\}},G_{\{1\}}G_{\{2\}}$, $G_{\{1,2\}}$ and rearranging some terms, we get
	\begin{align*}
	G(k_1,k_2)=\frac{1}{1-\frac{\rho_1+\rho_2}{N_1+N_2}}\bigg(\frac{1-\frac{\rho_1}{N_1+k_2}}{1-\frac{\rho_1}{N_1+N_2}}+\frac{1-\frac{\rho_2}{N_2+k_1}}{1-\frac{\rho_2}{N_1+N_2}}-1\bigg).
	\end{align*}
} 

\TRtext{
	\section{Proof of Lemma ~\ref{lemma:coc_mono}}
	\label{appendix:coc_mono}
	1)
	The mean response time of the overall system $D(k_1,k_2)$ is given by:
	$D(k_1,k_2)=\frac{\lambda_1}{\lambda_1+\lambda_2}D_1(k_1,k_2)+\frac{\lambda_2}{\lambda_1+\lambda_2}D_2(k_1,k_2)$. On simplification we get,
	$$D(k_1,k_2)=\frac{\rho_1+\rho_2}{N_1+N_2-\rho_1-\rho_2)}+\frac{L_1(k_1,k_2)+L_2(k_1,k_2)}{H(k_1,k_2)}$$
	where,\\
	$H(k_1,k_2)=\bigg(\frac{1-\frac{\rho_1}{N_1+N_2}}{1-\frac{\rho_1}{N_1+k_2}}+\frac{1-\frac{\rho_2}{N_1+N_2}}{1-\frac{\rho_2}{N_2+k_1}}-1\bigg)$ and  $L_i(k_1,k_2)=\frac{\frac{p_i}{N_i+k_{-i}}-\frac{p_i}{N_1+N_2}}{\big(1-\frac{p_i}{N_i+k_{-i}}\big)^2}.$ Now we show that the overall mean response time decrease with $k_i$. 
	\begin{align*}
	\frac{\partial{D}}{\partial k_{2}}=&\frac{-\rho_1}{H^2(N_i+k_{-i})^2}\Bigg[\frac{\big(1-\frac{\rho_1}{N_1+N_2}\big)^2}{\big(1-\frac{p_1}{N_1+k_2}\big)^4}\\&+\frac{L_2\big(1-\frac{\rho_2}{N_2+k_1}\big)}{\big(1-\frac{p_1}{N_1+k_2}\big)^2}\Bigg(2\frac{\big(1-\frac{\rho_1}{N_1+N_2}\big)}{\big(1-\frac{p_1}{N_1+k_2}\big)}-1\Bigg)+L_2\Bigg]<0.
	\end{align*}
	$\frac{\partial{D}_1}{\partial k_{2}}=\frac{\lambda_1+\lambda_2}{\lambda_1}\Big(\frac{\partial{D}(k_1,k_2)}{\partial k_{2}}-\frac{\lambda_1}{\lambda_1+\lambda_2}\frac{\partial{D}_1(k_1,k_2)}{\partial k_{2}}\Big) <0.$\\\\
	2) If $k_{-i}=N_{-i}$,then is easy to note that $D_i(k_1,k_2)$ depends on $k_i$ only through $H(k_1,k_2)$. Since $H(k_1,k_2)$ is strictly decreasing function of $k_i$, $D_i(k_1,k_2)$ increases with $k_i$.\\\\
	3) If $k_{-i}=N_{-i}$,
	$D_i(k_1,k_2)=\frac{\nu_i}{N_1+N_2-\rho_1-\rho_2}$, does not depend on $k_i$.
	
}

\section{Performance Characterization with $\cos$}
\label{app:cos_metrics}
\subsection{Assignment Rates}
We provide a detailed approach for computing the steady-state distribution and hence the expressions for $C_i(k_1,k_2)$ and $D_i(k_1,k_2)$. We introduce the following notations:
\begin{align*}
\mathcal{M}_1=&\{1,\cdots N_1-k_1\}\text{, dedicated servers of } P_1,\\
\mathcal{M}_2=&\{N_1+k_2+1,\cdots N_1+N_2\}\text{, dedicated servers of } P_2,\\
\mathcal{M}_3=&\{N_1-k_1+1,\cdots N_1+k_2\}\text{, shared servers.}
\end{align*}
For a given subset of servers $m=\{m_1,\cdots,m_j,m_k\}$, define
$x_i(m) =$  number of servers from the set $\mathcal{M}_i$ in $m$ and $x(m)=(x_1(m),x_2(m),x_3(m))$. Also define $\bar{x}(m)=(\bar{x}_1(m),\bar{x}_2(m),\bar{x}_3(m))$
where $\bar{x}_i(m)=|\mathcal{M}_i|-x_i(m)$ and $e_i$ be the unit vector in the direction $i$.

All servers in $\mathcal{M}_i$ are identical, hence $\lambda_{m_j}\{m\}$ is same for all $m_j \in \mathcal{M}_i$ for $i=1,2,3.$ Using the assignment rate condition it can also be shown that  $\lambda_{m_j}\{m\}=\lambda_{m_j}\{m'\}$ if $x(m)=x(m')$. Hence, we can define $r_i(x(m))=\lambda_{m_j}\{m\}$ for $m_j \in \mathcal{M}_i$ and $m \in \mathcal{M}$.
Since the vector $x$ takes values from a finite set  $\mathcal{X} ( \in \mathds{Z}^3$), calculating $r_i(x)$ for all values in $\mathcal{X}$ gives all the assignment rate probabilities $\lambda_{m_j}\{m\}$. 

Recall that $\lambda(m)$ is the aggregate arrival rate
corresponding to jobs which cannot be served by any server in the set $\mathcal{M} \setminus m$. Define $\bar{\lambda}(m)= \lambda_1+\lambda_2 -\mathcal{M} \setminus m$, the aggregate arrival rate
corresponding to jobs which can be served by the servers in $\mathcal{M} \setminus m$, note that $\bar{\lambda}(m)=\bar{\lambda}(m')$ if $x(m)=x(m')$. Hence $r(x(m))=\bar{\lambda}(m)$. Using this and the assignment rate condition  stated in Section \ref{section:cos}, we obtain the following set of equations between $r_i(x)$, which can be solved to get $r_i(x)$ for all $x \in \mathcal{X}$ and $i=1,2,3$.
$$\bar{x}_1r_1(x)+\bar{x}_2r_2(x)+\bar{x}_3r_3(x)=r(x).$$
For $  x \in \mathcal{X}\text{ such that } x_1<N_1-k_1 \text { and } x_3<k_1+k_2,$
$$r_1(x)r_3(x+e_1)=r_1(x+e_3)r_3(x).$$
For $  x \in \mathcal{X}\text{ such that } x_2<N_2-k_2 \text { and } x_3<k_1+k_2,$
$$r_2(x)r_3(x+e_2)=r_2(x+e_3)r_3(x).$$ 
$$r_i((N_1-k_1,N_2-k_2,k_1+k_2)-e_i)=\begin{cases}
\lambda_1 \text{ if } i=1,\\
\lambda_2 \text{ if } i=2,\\
\lambda_1+\lambda_2 \text{ if } i=3.
\end{cases}$$
$$r_i(x)=0 \text{ if } x_i=|\mathcal{M}_i|.$$
$$\text{Recall that, } \Pi_\lambda(\lbrace m_1,\cdots,m_i\rbrace)=\prod_{j=1}^{i}\lambda_{m_j}{(\lbrace m_1,\cdots,,m_{j-1}\rbrace)}.$$
Again due to the assignment rate condition, $\Pi_\lambda(m)=\Pi_{\lambda}(m')$ if $x(m)=x(m')$, hence we define $\Pi(x(m))=\Pi_\lambda(m)$.
\subsection{The Waiting Probability}
For a state  $s=(n_i,m_i,\cdots,n_2,m_2,n_1,m_1)$, $m(s)$ is the set of busy servers., i.e, $m(s)=\{m_1,m_2\cdots, m_i\}$. then the waiting probability $C_i(k_1,k_2)$ is follows:
$$C_1(k_1,k_2)=\sum_{s:x_1(m(s))+x_3(m(s))=N_1+k_2.} \pi(s).$$
Let us define $\pi_{\{\mathcal{A}\}}$ as the probability that an arriving job of the provider $\mathcal{A}$ has to wait ($\mathcal{A} \in\{\{\phi\},\{1\},\{2\},\{1,2\}\}$). Equivalently, $\pi_{\{\mathcal{A}\}}$ is the sum of stationary probabilities of those states in which all servers which can serve jobs from providers $\mathcal{A}$ are busy. Then the waiting probabilities for Provider $P_i$ can be expressed as, $$C_i(k_1,k_2)=\pi_{\{i\}}+\pi_{\{1,2\}},$$
where,
$\pi_{\{1\}}=\sum_{s:x_1(m(s))+x_3(m(s))=N_1+k_2,x_2(m(s)<N_2-k_2}\pi(s).$
After obtaining the assignment probabilities and then using the expression for $\pi(s)$ from Theorem \ref{thm:visschers}, we get  $\pi_{\{1\}}=\sum\limits_{x_2=0}^{N_2-k_2-1}\frac{\Pi(N_1-k_1,y,k_1+k_2)\pi(0)}{(N_1+k_2+x_2)!}{N_2-k_2 \choose y}\Bigg(\sum_{w=0}^{x_2}{x_2 \choose w}w!(N_1+k_2)(N_1+k_2+x_2-w-1)!\bigg(\Pi_{v=0}^w\frac{(N_1+k_2+x_2-w+v)}{(N_1+k_2+x_2-w+v-\lambda_1)}\bigg)\Bigg).$
Similarly, we can write the expression for $\pi_{\{2\}}$ using symmetry, also
$\pi_{\{1,2\}}=\sum_{s:x_1(m(s))+x_2(m(s))+x_3(m(s))=N_1+N_2}\pi(s)$,\\\\
$\pi_{\{1,2\}}=\frac{\Pi(N_1-k_1,N_2-k_2,k_1+k_2)\pi(0)}{(N_1+N_2-1)!(N_1+N_2-\lambda_1-\lambda_2)}\Bigg(\sum_{w=0}^{N_2-k_2}{N_2-k_2 \choose w}w!\\(N_1+k_2)(N_1+N_2-w-1)!\bigg(\Pi_{v=0}^{w-1}\frac{(N_1+N_2-w+v)}{(N_1+N_2-w+v-\lambda_1)}\bigg)\Bigg)$.

Similarly the expression for mean response time can be computed, which is omitted due to lack of space. As one can see these expressions though easy to compute numerically are not very amenable to analysis.
\section{Proof of Theorem ~\ref{thm:2servers}}
\label{app:proof_2servers}
The proof of Theorem \ref{thm:2servers} is structured as follows:
first we show that $C_i(k_1,k_2)$ satisfies a set of 3 conditions in Lemma \ref{lemma:cos_mono}. Then we show that if those conditions are satisfied then Pareto region lies on the boundary, i.e. at least one of the $k_i=N_i$. After showing that the Pareto-frontier lies on the boundary, the explicit structure of the Pareto-frontier described in Theorem \ref{thm:2servers} is direct consequence of the monotonicity properties as shown in \cite{FullVersion}. The proof is omitted due to lack of space. 
\begin{lemma}\label{lemma:cos_mono}
	For $\cos$ replication and $N_1=1$ and $N_2=1$	
	\begin{enumerate}
		\item $C_i(k_1,k_2)$ is a strictly decreasing function of $k_{-i},$
		\item $C_i(k_1,k_2)$ is a strictly increasing function of $k_{i},$
		\item $\frac{\partial C_1}{\partial k_2}\frac{\partial C_2}{\partial k_1}>\frac{\partial C_1}{\partial k_1}\frac{\partial C_2}{\partial k_2}.$
	\end{enumerate}
\end{lemma}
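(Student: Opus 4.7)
The plan is to exploit the bilinear structure of $C_i(k_1,k_2)$ on the extended domain $[0,1]^2$ that arises from the time-sharing construction in Section~\ref{section:cos}, reducing each of the three claims to verifiable relations among the four corner values.

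From the time-sharing interpretation, $C_i(k_1,k_2)$ is the convex combination
\begin{align*}
C_i(k_1,k_2) &= (1-k_1)(1-k_2)\,C_i(0,0) + (1-k_1)k_2\,C_i(0,1) \\
& \quad + k_1(1-k_2)\,C_i(1,0) + k_1 k_2\,C_i(1,1),
\end{align*}
which is separately affine in $k_1$ and in $k_2$. I would compute the four corner values using Theorem~\ref{thm:visschers}: the configuration $(0,0)$ reduces to a standalone $M/M/1$ for each provider, giving $C_i(0,0) = \rho_i$; the configuration $(1,1)$ is an aggregated $M/M/2$ with load $\rho_1+\rho_2$, giving the Erlang-C value for $C_i(1,1)$; and the asymmetric configurations $(0,1)$ and $(1,0)$, in which one provider has two eligible servers while the other has only the shared one, produce explicit rational expressions in $(\rho_1,\rho_2)$ from the product-form distribution of Theorem~\ref{thm:visschers}.

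For claims (1) and (2), since $C_i$ is bilinear, the partial derivative $\partial C_i/\partial k_j$ is affine in the complementary coordinate $k_{-j}$ alone. Its sign on $[0,1]$ is therefore determined by its values at the two endpoints, so each claim reduces to four corner inequalities. These inequalities---such as $C_1(0,0) > C_1(0,1)$ and $C_1(1,0) > C_1(1,1)$ for claim~(1) applied to $C_1$, and $C_1(1,0) > C_1(0,0)$ and $C_1(1,1) > C_1(0,1)$ for claim~(2)---admit intuitive explanations (``the other provider contributing more helps me'' for claim~(1); ``contributing my own server hurts me'' for claim~(2)) and can be verified directly from the corner expressions obtained above.

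Claim (3) is the main obstacle. Crucially, $\partial C_i/\partial k_i$ is affine in $k_{-i}$ alone while $\partial C_i/\partial k_{-i}$ is affine in $k_i$ alone. Consequently, the Jacobian-type quantity
\[
\Delta(k_1,k_2) := \frac{\partial C_1}{\partial k_2}\frac{\partial C_2}{\partial k_1} - \frac{\partial C_1}{\partial k_1}\frac{\partial C_2}{\partial k_2}
\]
is itself bilinear on $[0,1]^2$ and therefore attains its minimum at one of the four corners. Claim~(3) thus reduces to the four scalar inequalities $\Delta(\epsilon_1,\epsilon_2) > 0$; for instance, at $(0,0)$ the inequality becomes
\[
(C_1(0,0) - C_1(0,1))(C_2(0,0) - C_2(1,0)) > (C_1(1,0) - C_1(0,0))(C_2(0,1) - C_2(0,0)),
\]
and analogous inequalities hold at the other three corners. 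Via the explicit corner expressions from the preceding step, each such corner inequality becomes a polynomial inequality in $(\rho_1,\rho_2)$ on the stable region $(0,1)^2$. The main technical effort lies in these four algebraic verifications: the asymmetric corner values $C_1(0,1)$ and $C_2(1,0)$ yield the most intricate rational expressions, and I anticipate that each corner inequality can be confirmed by clearing denominators and using the strict stability constraints $\rho_i < 1$ together with the monotonicities established in claims (1) and (2).
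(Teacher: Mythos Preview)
Your plan is correct and mirrors the paper's proof: both exploit the bilinear time-sharing structure of $C_i(k_1,k_2)$ to reduce claims~(1) and~(2) to sign checks on the affine partial derivatives via the corner values, and both reduce claim~(3) to checking positivity of $\Delta$ at finitely many points. The only minor difference is that the paper observes the $k_1k_2$ coefficient of $\Delta$ actually cancels (since the mixed second difference of $C_i$ appears symmetrically in both cross-partials), so $\Delta = \alpha k_1 + \beta k_2 + \gamma$ is affine and one needs only verify $\alpha,\beta,\gamma>0$ rather than your four corner checks---but your bilinear-minimum-at-a-corner argument is equally valid.
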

\begin{proof}
	As defined in Section \ref{section:cos}, for any $k_1 \in [0,1] \text{ and } k_2 \in [0,1],$ the waiting probability is given by:
	$$C_i(k_1,k_2)=(1-k_1)(1-k_2)C_i(0,0)+(k_1)(1-k_2)C_i(1,0)
	$$$$+(1-k_1)k_2C_i(0,1)+k_1k_2C_i(1,1),$$
	where $C_i(0,0),C_i(1,0),C_i(0,1)$ and $C_i(0,1)$ can be explicitly computed using the results in Appendix \ref{app:cos_metrics}. For the sake of tidiness, we use the notation $C_i$ instead of $C_i(k_1,k_2)$ for the rest of the proof.
	For $i=1$,
	\begin{align*}
	\frac{\partial C_1}{\partial k_1}
	=&(1-k_2)\big[C_1(1,0)-C_1(0,0)\big]+k_2[C_1(1,1)-C_1(0,1)\big]\\
	=&\frac{(1-k_2)}{\Omega_1}(2-\lambda_1+\lambda_2-\lambda_1\lambda_2-\lambda_1^2)\lambda_2^2\\
	&+\frac{k_2}{\Omega_2}(2-\lambda_1-\lambda_2)\lambda_2(\lambda_1+\lambda_2)^2>0.\\
	\frac{\partial C_1}{\partial k_2}
	=&(1-k_1)\big[C_1(0,1)-C_1(0,0)\big]
	+k_1[C_1(1,1)-C_1(1,0)\big]\\
	=&\frac{(1-k_1)}{\Omega_2}\lambda_1(1-\lambda_2)[\lambda_1(\lambda_2-4)+\lambda_2(\lambda_2-2)]\\
	&+\frac{k_1}{\Omega_1}(\lambda_1\lambda_2+\lambda_2^2+2\lambda_1-4)(\lambda_1+\lambda_2)^2<0,
	\end{align*}
	where\\
	$\Omega_1
	=(\lambda_1+\lambda_2+\lambda_2^2)(1-\lambda_1)+\lambda_1(1-\lambda_1\lambda_2)+3\lambda_2+\lambda_2^2
	>0,$\\
	$\Omega_2=(\lambda_1+\lambda_2+\lambda_1^2)(1-\lambda_2)+\lambda_2(1-\lambda_1\lambda_2)+3\lambda_1+\lambda_1^2
	>0.$
	Similarly due to symmetry, it can be shown that $\frac{\partial C_2}{\partial k_1}< 0$ and $\frac{\partial C_2}{\partial k_2}> 0$. Hence, $C_i(k_1,k_2)$ is a strictly decreasing function of $k_{-i},$ and $C_i(k_1,k_2)$ is a strictly increasing function of $k_{i}.$ 
	$$\frac{\partial C_1}{\partial k_2}\frac{\partial C_2}{\partial k_1} - \frac{\partial C_1}{\partial k_1}\frac{\partial C_2}{\partial k_2}\nonumber
	=\; \alpha k_1+\beta k_2+\gamma,$$
	where,
	\begin{align*}
	\alpha=&\big[C_1(1,0)-C_1(0,0)\big]\big[C_2(0,1)-C_2(1,1)\big]-\\&\big[C_1(1,1)-C_1(0,1)\big]\big[C_2(0,0)-C_2(1,0)\big]\\
	=&\frac{2(1-\lambda_1)(2-\lambda_1-\lambda_2)(2+\lambda_1-\lambda_2+\lambda_1^2+\lambda_1\lambda_2)}{(2+\lambda_1+\lambda_2)\Omega_1\Omega_2}>0,\\
	\beta=&\big[C_1(1,0)-C_1(1,1)\big]\big[C_2(0,1)-C_2(0,0)\big]-\\&\big[C_1(0,0)-C_1(0,1)\big]\big[C_2(1,1)-C_2(1,0)\big]\\
	=&\frac{2(1-\lambda_2)(2-\lambda_1-\lambda_2)(2+\lambda_2-\lambda_1+\lambda_2^2+\lambda_1\lambda_2)}{(2+\lambda_1+\lambda_2)\Omega_1\Omega_2}>0,\\
	\gamma=&\big[C_1(0,0)-C_1(0,1)\big]\big[C_2(0,0)-C_2(1,0)\big]-\\&\big[C_1(1,0)-C_1(0,0)\big]\big[C_2(0,1)-C_2(0,0)\big]\\
	=&\frac{4\lambda_1\lambda_2(1-\lambda_1)(1-\lambda_2)(2-\lambda_1-\lambda_2)}{\Omega_1\Omega_2}>0.
	\end{align*}
	Hence we have,
	$\frac{\partial C_1}{\partial k_2}\frac{\partial C_2}{\partial k_1} - \frac{\partial C_1}{\partial k_1}\frac{\partial C_2}{\partial k_2}>0.$
\end{proof}
Now given that the Lemma \ref{lemma:cos_mono} holds, we show that the
Pareto Frontier lies on the boundary as follows: we show that for any
$(k_1,k_2)$ where $k_i \in [0,N_i)$, there exists a $\theta$ such that
$\nabla C_i(k_1,k_2)\cdot(1,\theta)< 0 \text{ for } i=1,2.$
$C_1(k_1,k_2)\cdot(1,\theta)> 0$ and $C_2(k_1,k_2)\cdot(1,\theta)> 0$
are equivalent, respectively, to 
$$\theta>-\frac{\frac{\partial C_1}{\partial
		k_1}}{\frac{\partial C_1}{\partial k_2}}, \quad
\theta<-\frac{\frac{\partial C_2}{\partial k_1}}{\frac{\partial C_2}{\partial k_2}}.$$
Therefore if Statement~3 of Lemma~\ref{lemma:cos_mono} holds, there exists a direction $\theta$ along which both providers will have lower values of $C_i(k_1,k_2)$ and hence any such
configuration cannot be Pareto-optimal.
\clearpage
\end{document}